\newcommand{\mv}[1]{\mbox{\boldmath{$ #1 $}}}
\newtheorem{lemma}{\underline{Lemma}}
\newtheorem{proposition}{\underline{Proposition}}
\newtheorem{remark}{\underline{Remark}}
\newenvironment{proof}[1][Proof]{\begin{trivlist}
\item[\hskip \labelsep {\mvseries #1}]}{\end{trivlist}}
\newcommand{\qed}{\nobreak \ifvmode \relax \else
      \ifdim\lastskip<1.5em \hskip-\lastskip
      \hskip1.5em plus0em minus0.5em \fi \nobreak
      \vrule height0.75em width0.5em depth0.25em\fi}
\begin{document}
\title{{Constant Envelope Precoding for MIMO Systems}
\footnote{This work was presented in part at the IEEE International Conference on Communications (ICC), Kuala Lumpur, Malaysia, May 23-27, 2016 \cite{ICC16}.}\footnote{S. Zhang is with the NUS Graduate School for Integrative Sciences and Engineering (NGS), National University of Singapore (e-mail:shuowen.zhang@u.nus.edu). She is also with the Department of Electrical and Computer Engineering, National University of Singapore.}\footnote{R. Zhang is with the
Department of Electrical and Computer Engineering, National
University of Singapore (e-mail:elezhang@nus.edu.sg). He is also
with the Institute for Infocomm Research, A*STAR, Singapore.}\footnote{T. J. Lim is with the Department of Electrical
and Computer Engineering, National University of Singapore
(e-mail:eleltj@nus.edu.sg).}}
\author{\IEEEauthorblockN{Shuowen~Zhang, Rui~Zhang, and Teng Joon Lim}}
\maketitle
\vspace{-12.9mm}
\begin{abstract}
Constant envelope (CE) precoding is an appealing transmission technique, which enables highly efficient power amplification, and is realizable with a single radio frequency (RF) chain at the multi-antenna transmitter. In this paper, we study the transceiver design for a point-to-point multiple-input multiple-output (MIMO) system with CE precoding. Both single-stream transmission (i.e., beamforming) and multi-stream transmission (i.e., spatial multiplexing) are considered. For single-stream transmission, we optimize the receive beamforming vector to minimize the symbol error rate (SER) for any given channel realization and desired constellation at the combiner output. By reformulating the problem as an equivalent quadratically constrained quadratic program (QCQP), we propose an efficient semi-definite relaxation (SDR) based algorithm to find an approximate solution. Next, for multi-stream transmission, we propose a new scheme based on antenna grouping at the transmitter and minimum mean squared error (MMSE) or zero-forcing (ZF) based beamforming at the receiver. The transmit antenna grouping and receive beamforming vectors are then jointly designed to minimize the maximum SER over all data streams. Finally, the error-rate performance of single- versus multi-stream transmission is compared via simulations under different setups.
\end{abstract}
\vspace{-3.9mm}
\begin{IEEEkeywords}
Constant envelope (CE) precoding, multiple-input multiple-output (MIMO), receive beamforming, semi-definite relaxation (SDR).
\end{IEEEkeywords}
\section{Introduction}
Motivated by the demand for power-efficient and cost-effective radio frequency (RF) components in wireless communication systems, there has been an upsurge of research interests in \emph{constant envelope (CE) precoding} for multi-antenna or multiple-input multiple-output (MIMO) communications \cite{SUCE,JSTSPCE,CEadaptive,CEMulticast,MUCE,improved,freqCE,phaseconstraint}. Specifically, under the so-called per-antenna CE constraint that restricts the equivalent complex baseband signal at each transmit antenna to have constant amplitude, CE precoding performs a mapping (which is generally nonlinear) from the desired information-bearing symbols to solely the transmitted signal phases at multiple antennas, based on the instantaneous channel state information (CSI). In practice, the transmitted signal phases can be controlled at either the baseband or the RF band, which correspond to two transmitter architectures for realizing CE precoding as shown in Fig. \ref{CEdiagram} (a) and (b), respectively.
\begin{figure}[ht]
  \centering
  \subfigure[Architecture I: Baseband phase control]{
    \label{Multi_RF}
    \includegraphics[width=5.1in]{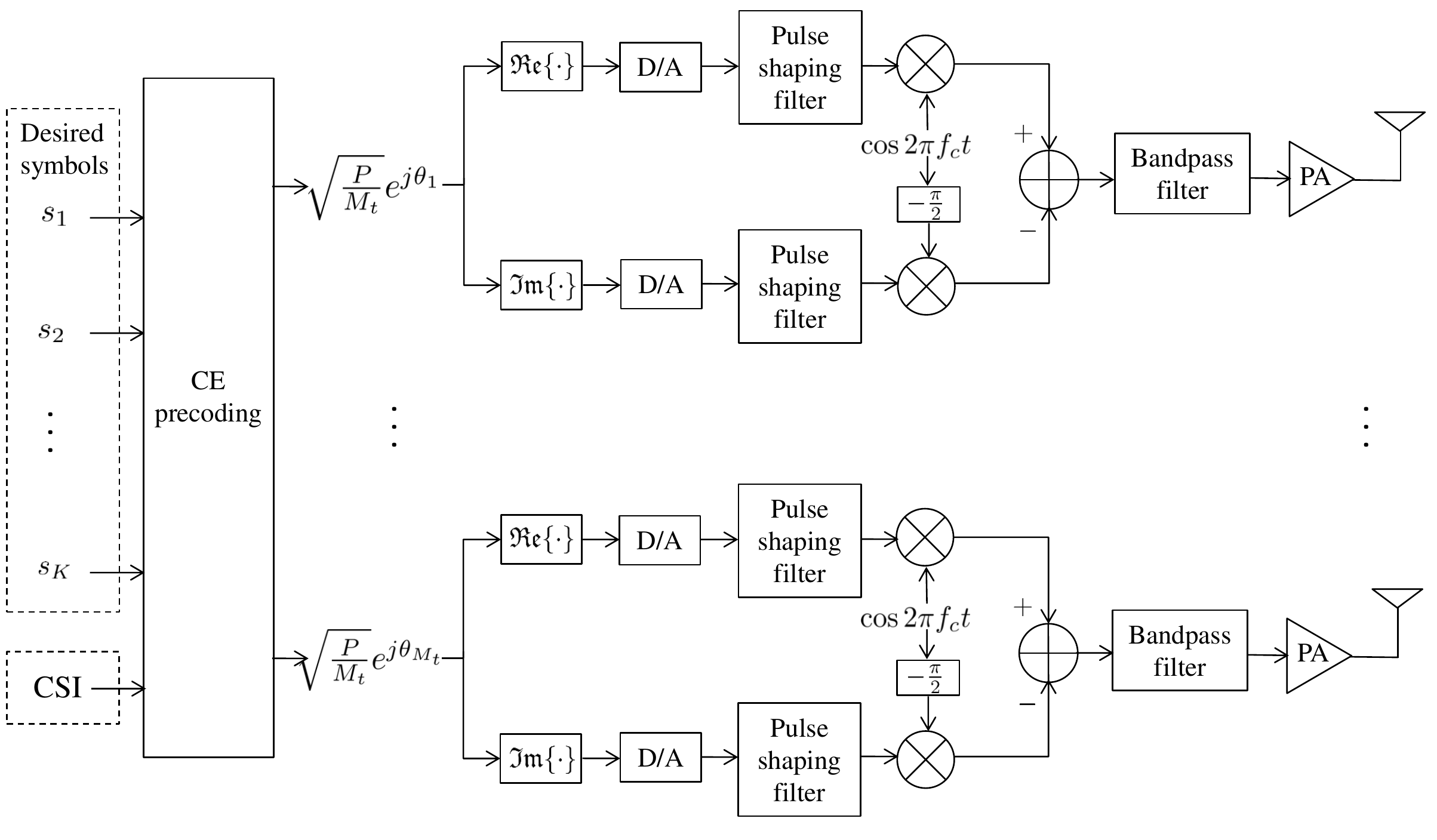}}
    \hspace{0in}
  \subfigure[Architecture II: RF-band phase control]{
    \label{Single_RF}
    \includegraphics[width=3.5in]{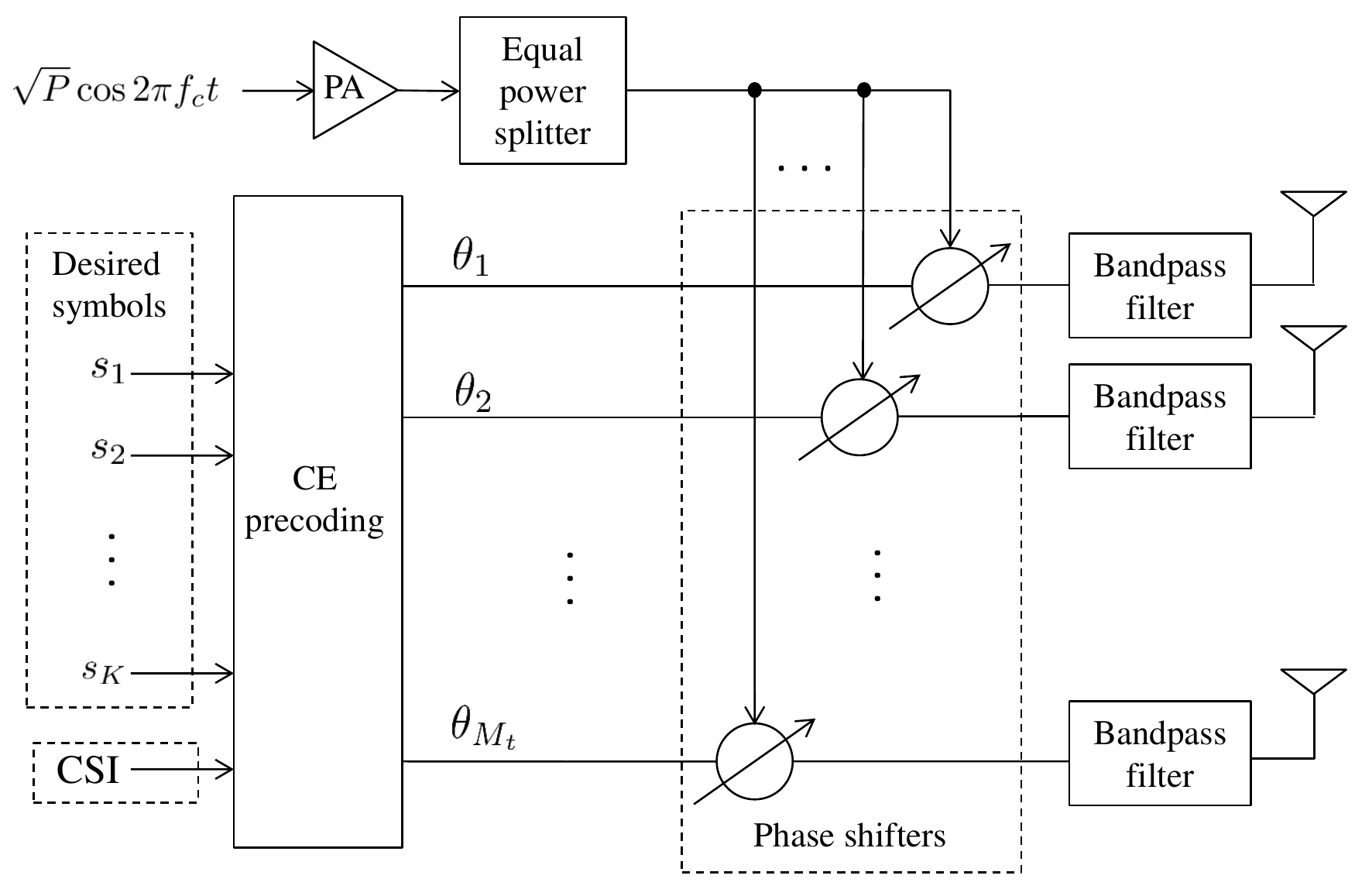}}
  \caption{Two transmitter architectures for CE precoding in MIMO system.}
  \label{CEdiagram}
\end{figure}

CE precoding is advantageous compared to its non-CE counterparts due to the following reasons. First, notice from Fig. \ref{CEdiagram} that for each of the two transmitter architectures, the input RF signal for the power amplifier (PA) is a continuous-time CE signal, which in general leads to high PA efficiency.\footnote{It is worth noting that for Architecture I, the RF signal at each antenna is generally not a perfect CE signal in practice, due to the use of non-ideal (e.g., raised-cosine) pulse shaping filters, which may compromise the effectively harvested PA efficiency gain. However, compared to other non-CE precoded signals, such quasi-CE signals can still achieve better PA efficiency in general.} Since highly efficient PAs (e.g., class-C and switched mode PAs) in practice have nonlinear amplitude transfer functions, they can only be used with CE input signals or else output distortion arises \cite{RFPA}. Moreover, since CE signals have the lowest possible peak-to-average power ratio (PAPR), they require the minimum backoff for operation when linear PAs (e.g., class-A and class-B PAs) are used, thus achieving high efficiency \cite{RFPA}. The low PAPR of CE signals also allows for the use of less expensive PAs with smaller dynamic range.
Furthermore, for conventional non-CE precoding techniques, both the amplitude and phase of the equivalent complex baseband signal at each transmit antenna need to vary depending on the instantaneous channel and/or symbol realization. Therefore, they are practically performed in the digital domain and require a dedicated RF chain for each transmit antenna, which is both costly and power-consuming. In contrast, although CE precoding can be similarly realized using Architecture I in Fig. \ref{CEdiagram} (a), as proposed by prior work (e.g., \cite{SUCE}), it can be alternatively implemented directly in the RF domain by using a network of digitally-controlled phase shifters with a single RF chain as shown in Architecture II in Fig. 1 (b), which was first proposed in \cite{CEMulticast}.\footnote{It is worth noting that with Architecture II, the transmission rate is limited by the switching speed of the phase shifters, which, however, can be very fast in practice (down to the scale of nanoseconds per use \cite{phaseshifter}).} Note that despite the single RF chain, this architecture can even support concurrent transmission of multiple data streams (i.e., spatial multiplexing) via CE precoding.\footnote{Note that there is another line of work on the so-called load modulated MIMO (see e.g., \cite{WSA14}) which also supports spatial multiplexing with a single RF chain. Different from Architecture II, this scheme requires an additional matching network and multiple real-time adjustable load modulators at the transmitter.}

However, the aforementioned benefits of CE precoding come at the cost of a transceiver design that is more challenging than those with the conventional average-based sum power constraint (SPC) and per-antenna power constraint (PAPC) (see, e.g., \cite{linearprecoding,dualityPAPC,LTCC,multicellBD,massivePAPC}), which are less restrictive. In \cite{SUCE,JSTSPCE,CEadaptive}, a single-user multiple-input single-output (MISO) system with the per-antenna CE constraint is studied. It is shown in \cite{SUCE,JSTSPCE} that by varying the transmitted signal phases, the noise-free signal at the receiver always lies in an annular region, whose boundaries are characterized by the instantaneous channel realization and per-antenna transmit power. Moreover, efficient CE precoding algorithms are proposed in \cite{SUCE,JSTSPCE} to find the nonlinear mapping from any desired received signal point within the annulus to the transmitted signal phases based on the instantaneous CSI. Furthermore, note that a desired receiver constellation is feasible for CE precoding in a MISO channel if and only if it can be scaled to lie in the annulus, such that the corresponding transmitted CE signals can be found for all the signal points in the desired constellation. Therefore, for a fading channel that yields a time-varying annulus, a fixed receiver constellation may not be always feasible, thus resulting in severe reliability degradation (assuming transmission is blocked when the constellation is not feasible). To resolve this issue, both fixed-rate and variable-rate adaptive receiver constellation designs are proposed in \cite{CEadaptive} for CE precoding in MISO fading channel.

In addition, CE precoding has been investigated in various multi-user systems. For MISO multicast systems with the common symbol for multiple users drawn from a given constellation, the joint optimization of transmitter CE precoding and receiver constellation scaling and rotation for symbol error rate (SER) minimization is studied in \cite{CEMulticast}. For large-scale MISO broadcast systems, low-complexity CE precoding algorithms are proposed in \cite{MUCE,improved} for frequency-flat channels and in \cite{freqCE} for frequency-selective channels. It is shown that with a sufficiently large number of transmit antennas, arbitrarily low multi-user interference (MUI) power can be achieved at each user with the proposed schemes in \cite{MUCE,improved,freqCE}. As an extension to \cite{freqCE}, an efficient CE precoding scheme is proposed in \cite{phaseconstraint} considering an additional constraint on the signal phase variation at each transmit antenna between consecutive channel uses, such that the spectral regrowth resulting from abrupt phase changes can be potentially eliminated. It is shown that the extra transmit power required for the proposed scheme in \cite{phaseconstraint} to achieve the same transmission rate as that in \cite{freqCE} is small.

In this paper, we study the transceiver design in a point-to-point MIMO system with CE precoding, assuming perfect CSI is available at both the transmitter and the receiver.\footnote{How to efficiently obtain the MIMO channel knowledge at the transmitter using either of the two architectures in Fig. \ref{CEdiagram} is an interesting problem, which is left for our future work.} Both single-stream transmission (i.e., beamforming) and multi-stream transmission (i.e., spatial multiplexing) are considered. Our main contributions are summarized as follows:
\begin{itemize}
\item For single-stream transmission, we consider the problem of receive beamforming vector optimization to minimize the SER at the combiner (beamforming) output, for any given channel realization and desired constellation at the combiner output. Specifically, by approximating the exact SER with its union bound, we formulate the equivalent problem of maximizing the minimum Euclidean distance (MED) between any two signal points at the combiner output while guaranteeing the feasibility of the constellation (i.e., it can be scaled to lie in an annular region characterized by the channel realization and receive beamforming). We first show that this problem is feasible for any desired constellation at the combiner output if the rank of the channel matrix is no smaller than two, which always holds under our assumed independent and identically distributed (i.i.d.) Rayleigh fading MIMO channel. Then, we introduce an auxiliary vector to reformulate this problem into an equivalent quadratically constrained quadratic program (QCQP). By applying the semi-definite relaxation (SDR) technique as well as our customized Gaussian randomization methods, we propose an efficient algorithm to find an approximate solution to the QCQP.
\item Next, for multi-stream transmission, CE precoding that maps the symbols of multiple data streams to the transmitted signal phase at each antenna generally needs to be jointly designed with the MIMO receiver, which is a complicated problem to solve in general. To tackle this problem, we propose a new scheme based on transmit antenna grouping and minimum mean squared error (MMSE) or zero-forcing (ZF) based receive beamforming, which decouples the joint transceiver design problem for the multiple data streams to a set of parallel sub-problems, one for each data stream. The transmit antenna grouping and receive beamforming vectors are then jointly optimized to minimize the maximum SER over all data streams, subject to the constellation feasibility constraints. Finally, the error-rate performance of single-stream and multi-stream transmissions is compared under various practical setups.
\end{itemize}

The remainder of this paper is organized as follows. Section II introduces the system model for CE precoding for the cases of single-stream transmission and multi-stream transmission, respectively. Section III presents the receiver optimization problem for single-stream transmission and proposes an efficient solution. Section IV presents our scheme for the MIMO transceiver design for multi-stream transmission. Numerical results are provided in Section V to evaluate the performance of the proposed schemes. Finally, Section VI concludes the paper.

\textit{Notations}: Scalars and vectors are denoted by lower-case letters and boldface lower-case letters, respectively. $|z|$, $z^*$, $\arg\{z\}$ and $\mathfrak{Re}\{z\}$ denote the absolute value, the conjugate, the angle and the real part of a complex scalar $z$, respectively. $\|{\mv{z}}\|_p$ and $z_k$ denote the $l_p$-norm and the $k$th element of a vector ${\mv{z}}$, respectively. $\mathbb{C}^{M\times N}$ denotes the space of $M\times N$ complex matrices. ${\mv{I}}_M$ denotes the $M\times M$ identity matrix, and ${\mv{0}}$ denotes an all-zero matrix with appropriate dimension. For an $M\times N$ matrix ${\mv{A}}$, ${\mv{A}}^T$ and ${\mv{A}}^H$ denote its transpose and conjugate transpose, respectively; $\mathrm{rank}({\mv{A}})$ and $[{\mv{A}}]_{i,j}$ denote the rank of ${\mv{A}}$ and the $(i,j)$-th element of ${\mv{A}}$, respectively. The null space of ${\mv{A}}$ is defined as $\mathrm{Null}({\mv{A}})\overset{\Delta}{=}\{{\mv{x}}\in \mathbb{C}^{N\times 1}:{\mv{Ax}}={\mv{0}}\}$. For a square matrix $\mv{S}$, $\mathrm{tr}({\mv{S}})$ denotes its trace, and ${\mv{S}}\succeq {\mv{0}}$ means that ${\mv{S}}$ is positive semi-definite. The distribution of a circularly symmetric complex Gaussian (CSCG) random variable with mean $\mu$ and variance $\sigma^2$ is denoted by $\mathcal{CN}(\mu,\sigma^2)$; and $\sim$ stands for ``distributed as''. $\max\{x,y\}$ and $\min\{x,y\}$ denote the maximum and the minimum of two real numbers $x$ and $y$, respectively. $\mathbb{E}[\cdot]$ denotes the expectation operator.
\section{System Model}
Consider a point-to-point MIMO system with $M_t\geq 2$ antennas at the transmitter and $M_r\geq 2$ antennas at the receiver. We assume a quasi-static flat-fading environment with $\tilde{\mv{H}}\in \mathbb{C}^{M_r\times M_t}$ denoting the equivalent complex baseband channel matrix. For convenience, the entries of $\tilde{\mv{H}}$ are modeled by i.i.d. CSCG random variables with equal variance of $\beta$, i.e., $[\tilde{\mv{H}}]_{i,j}\sim \mathcal{CN}(0,\beta),\ \forall i,\ \forall j$, where $\beta$ specifies the average channel power attenuation due to path loss and shadowing; while our proposed design is applicable to arbitrary channel realizations. Note that under the above assumption, we have $\mathrm{rank}(\tilde{\mv{H}})=\min\{M_r,M_t\}$, i.e., $\tilde{\mv{H}}$ is a full-rank matrix, with probability one. For both transmitter architectures in Fig. \ref{CEdiagram}, the baseband transmission is modeled by
\begin{equation}
\tilde{{\mv{y}}}=\tilde{\mv{H}}{\mv{x}}+\tilde{\mv{n}},\label{channel}
\end{equation}
where $\tilde{\mv{y}}\in \mathbb{C}^{M_r\times1}$ and ${\mv{x}}\in \mathbb{C}^{M_t\times1}$ denote the received and the transmitted signal vectors, respectively; $\tilde{\mv{n}}\sim \mathcal{CN}(\mv{0},\sigma^2{\mv{I}}_{M_r})$ denotes the $M_r\times1$ CSCG noise vector at the receiver. We consider CE precoding at the transmitter, under the assumption that $\tilde{\mv{H}}$ is perfectly known at both the transmitter and the receiver. As shown in Fig. \ref{CEdiagram}, we assume a total transmit power denoted by $P$, which is equally allocated to the $M_t$ transmit antennas. With CE precoding, the equivalent complex baseband signal at each transmit antenna is expressed as
\begin{equation}
x_i=\sqrt{\frac{P}{M_t}}e^{j\theta_i},\quad i=1,...,M_t,\label{CEconstraint}
\end{equation}
where information is modulated in the transmitted signal phases $\theta_i\in[0,2\pi),\ i=1,...,M_t$.

Let $\bar{R}$ denote the transmission rate in bits/second/hertz (bps/Hz). We assume that the transmitted bit sequence is demultiplexed into $K,\ K\leq \min\{M_r,M_t\}$ data streams, each carrying $\frac{\bar{R}}{K}$ bits. For convenience of modulation, we assume $\frac{\bar{R}}{K}$ is an integer. Each data stream is further assumed to be modulated with the same constellation denoted by $\mathcal{S}$, which is of size $N=2^{\frac{\bar{R}}{K}}$. In the following, we present the transceiver model of the above system for the cases of single-stream transmission (i.e., $K=1$) and multi-stream transmission (i.e., $K\geq 2$), respectively.
\vspace{-1mm}
\subsection{Single-Stream Transmission}
For single-stream transmission with $K=1$, we let ${\mv{u}}\in \mathbb{C}^{M_r\times1}$ denote the receive beamforming vector, which is assumed to be normalized such that $\|{\mv{u}}\|_2=1$ without loss of generality. After applying the receive beamforming, the combiner output signal is given by
\vspace{-1mm}\begin{equation}
y={\mv{u}}^H\tilde{{\mv{y}}}={\mv{u}}^H\tilde{\mv{H}}{\mv{x}}+{n},\label{combineroutput}
\end{equation}
where ${\mv{u}}^H\tilde{\mv{H}}$ is the effective MISO channel from the transmitter to the combiner output, and ${n}={\mv{u}}^H\tilde{\mv{n}}$ denotes the effective noise, whose distribution can be shown to be given by ${n}\sim \mathcal{CN}(0,\sigma^2)$. Let $d\overset{\Delta}{=} {\mv{u}}^H\tilde{\mv{H}}{\mv{x}}= \sqrt{\frac{P}{M_t}}{\mv{u}}^H\tilde{\mv{H}}\left[e^{j\theta_1},...,e^{j\theta_{M_t}}\right]^T$ denote the noise-free signal at the combiner output.

First, note that the constellation $\mathcal{S}$ is feasible at the combiner output if and only if there exists a scaling factor $\alpha>0$, such that any symbol point on $\alpha \mathcal{S}$ can be mapped to CE signals at the transmitter, i.e., the following problem is feasible for any $s \in \mathcal{S}$:
\vspace{-1mm}\begin{align}
\mathrm{find}\quad &\{\theta_i:\theta_i\in [0,2\pi)\}_{i=1}^{M_t}\label{BF_S}\\
\mathrm{s.t.}\quad &d=\alpha s.\nonumber
\end{align}
By generalizing the results in \cite{SUCE,JSTSPCE} for the MISO channel, the feasible region of $d$ with a given $\mv{u}$ and $\theta_i\in [0,2\pi),\ \forall i$ can be shown to be given by
\vspace{-1mm}\begin{equation}
\mathcal{D}({\mv{u}})=\{d\in \mathbb{C}:r({\mv{u}})\leq |d|\leq R({\mv{u}}) \},\label{BFregion}
\end{equation}
where
\vspace{-1mm}\begin{align}
R({\mv{u}})=&\sqrt{\frac{P}{M_t}}\|{\mv{u}}^H\tilde{\mv{H}}\|_1,\label{R}\\ r({\mv{u}})=&\sqrt{\frac{P}{M_t}}\max\left\{2\|{\mv{u}}^H\tilde{\mv{H}}\|_{\infty}-\|{\mv{u}}^H\tilde{\mv{H}}\|_1,0\right\}.\label{r}
\end{align}
As a result of (\ref{BFregion}), $\mathcal{S}$ is feasible if and only if $\alpha>0$ exists such that $\alpha \mathcal{S}\subset \mathcal{D}({\mv{u}})$, or equivalently,
\vspace{-3mm}\begin{align}
\frac{r({\mv{u}})}{R({\mv{u}})}\leq\frac{\underset{s\in \mathcal{S}}{\min}|s|}{\underset{s\in \mathcal{S}}{\max}|s|}.\label{feascond}
\end{align}\vspace{-3mm}

Moreover, for any feasible $\mathcal{S}$ and the corresponding $\alpha$, efficient CE precoding algorithms proposed in \cite{SUCE,JSTSPCE} can be used to find the solution to Problem (\ref{BF_S}) for any $s\in \mathcal{S}$ based on ${\mv{u}}^H\tilde{\mv{H}}$, where the mapping from $\alpha s$ to ${\mv{x}}$ is generally nonlinear, in contrast to conventional linear precoding techniques. The details of these algorithms are omitted here for brevity. Therefore, the combiner output signal in (\ref{combineroutput}) is equivalently represented by
\vspace{-2mm}\begin{equation}
y=\alpha s+{n},\quad s\in {\cal S}.\label{equichannel}
\end{equation}\vspace{-2mm}
\begin{figure}[t]
  \centering
  \subfigure[Infeasible case with ${\mv{u}}^{(1)}={[1\ 0]}^T$]{
    \includegraphics[width=2.35in]{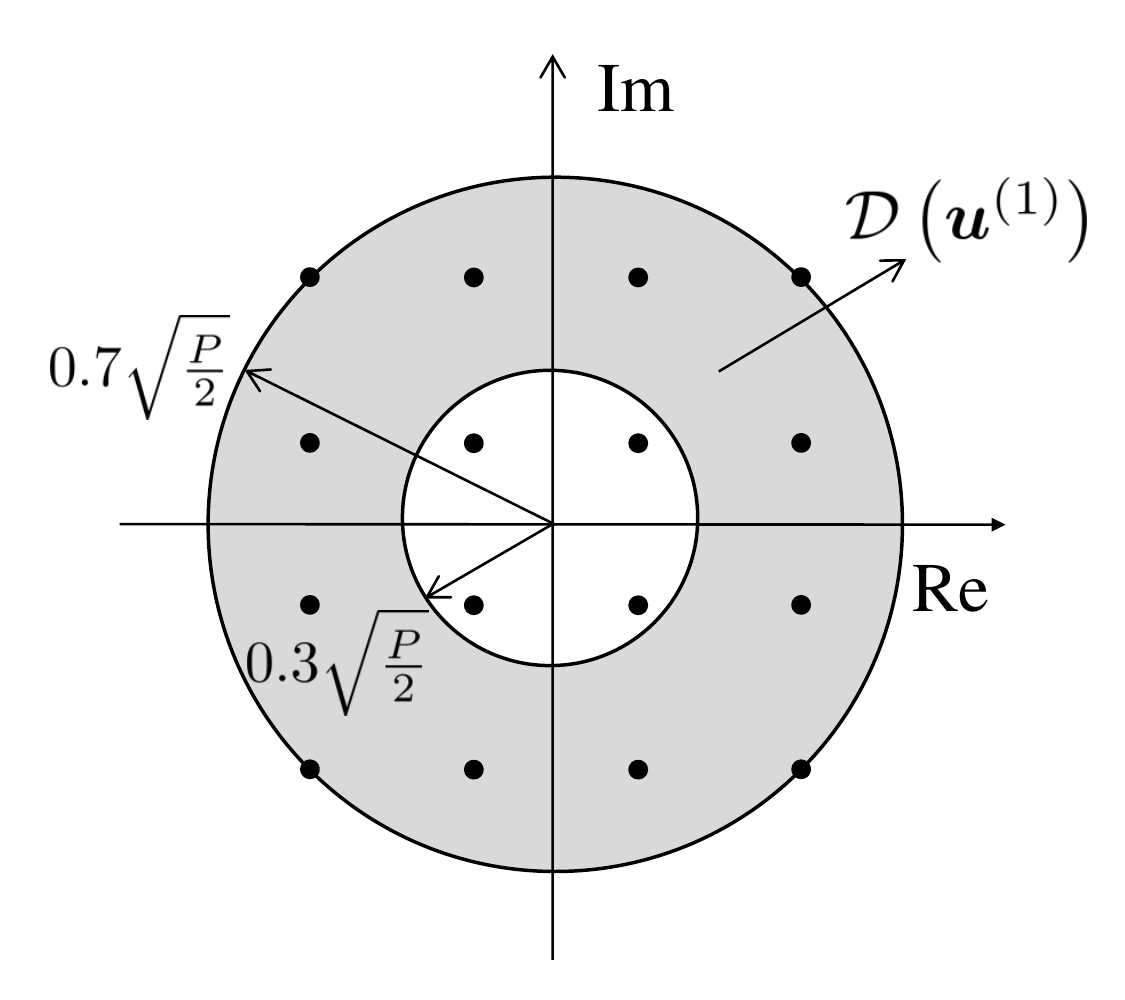}}
  \hspace{0in}
  \subfigure[Feasible case with ${\mv{u}}^{(2)}={[0\ 1]}^T$]{
    \includegraphics[width=2.35in]{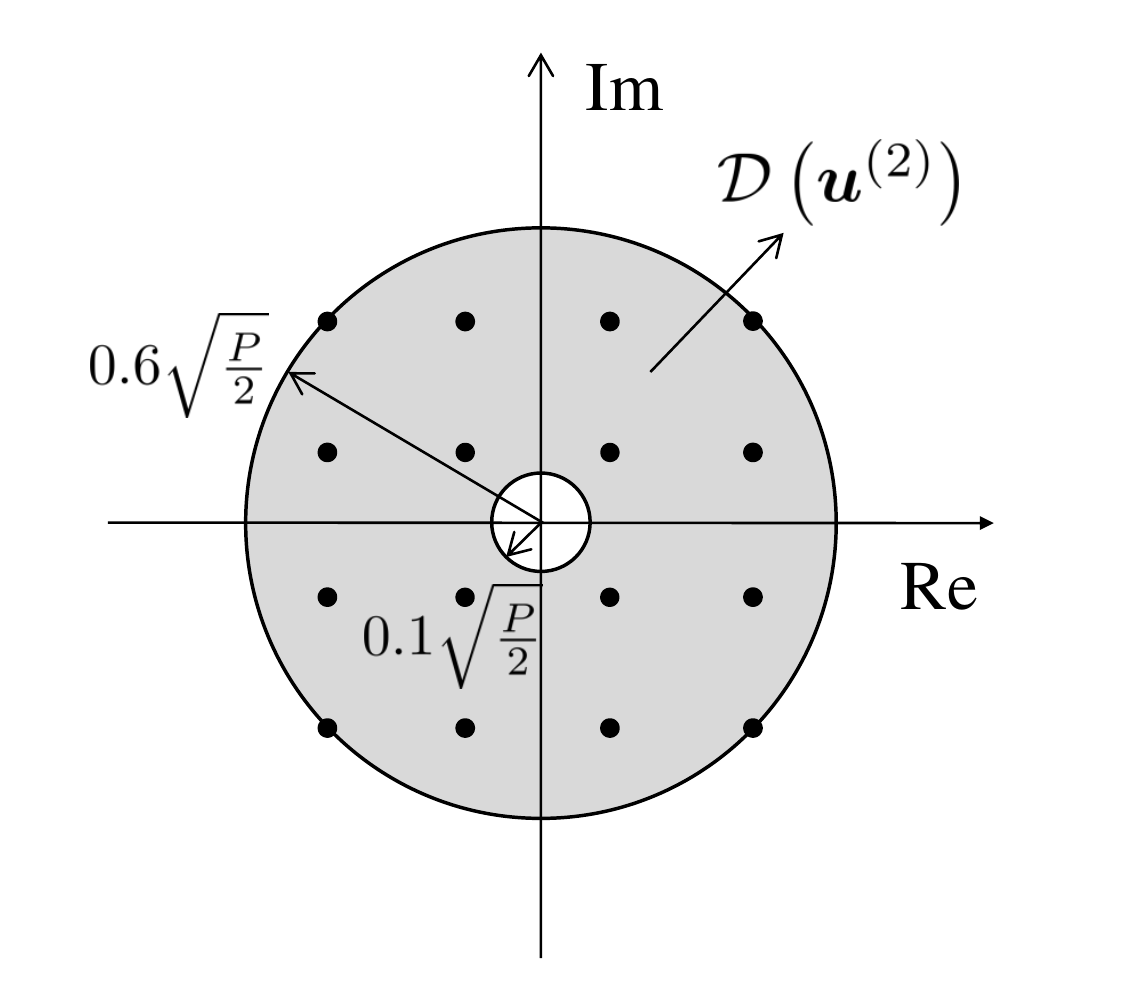}}
  \caption{Feasibility of 16-QAM for CE single-stream transmission with given $\tilde{\mv{H}}$ and different $\mv{u}$.}\label{BFfeasibility}
\end{figure}
\vspace{-8mm}
\begin{remark}
It is worth noting that in order to maximize the signal power at the combiner output and yet meet the feasibility constraint of $\alpha \mathcal{S}\subset \mathcal{D}({\mv{u}})$, we should set $\alpha=R({\mv{u}})=\sqrt{\frac{P}{M_t}}\|{\mv{u}}^H\tilde{\mv{H}}\|_1$ in (\ref{equichannel}) for any $\mathcal{S}$ that is feasible and satisfies $\underset{s\in\mathcal{S}}{\max} |s|=1$, such that the signal point with the largest amplitude in $\mathcal{S}$ lies on the outer boundary of $\mathcal{D}({\mv{u}})$ at the combiner output.
\end{remark}

Note that for given channel $\tilde{\mv{H}}$, $\mathcal{D}({\mv{u}})$ as well as the feasibility of $\mathcal{S}$ depends on the receive beamforming vector $\mv{u}$. For example, consider the case where $M_t=M_r=2$, $|[\tilde{\mv{H}}]_{1,1}|=0.5,\ |[\tilde{\mv{H}}]_{1,2}|=0.2,\ |[\tilde{\mv{H}}]_{2,1}|=0.35,\ |[\tilde{\mv{H}}]_{2,2}|=0.25$, and $\mathcal{S}$ is a 16-QAM (quadrature amplitude modulation) constellation (i.e., $\underset{s\in \mathcal{S}}{\min}|s|/\underset{s\in\mathcal{S}}{\max} |s|=\frac{1}{3}$). As shown in Fig. \ref{BFfeasibility}, $\mathcal{S}$ is infeasible with ${\mv{u}}^{(1)}=[1\ 0]^T$, but is feasible with ${\mv{u}}^{(2)}=[0 \ 1]^T$. As a result, we are motivated to investigate the design of $\mv{u}$ based on the channel realization $\tilde{\mv{H}}$ and desired constellation $\mathcal{S}$, which will be detailed in Section III.
\vspace{-1mm}
\subsection{Multi-Stream Transmission}
For multi-stream transmission with $K\geq 2$, we assume a linear receiver is used to decode $s_k$'s. Specifically, let ${\mv{u}}_k\in \mathbb{C}^{M_r\times 1}$ denote the receive beamforming vector for decoding $s_k$, which is assumed to be normalized such that $\|{\mv{u}}_k\|_2=1$ without loss of generality. Applying ${\mv{u}}_k^H$ to the received signal vector in (\ref{channel}) yields
\vspace{-1mm}\begin{align}
y_k={\mv{u}}_k^H\tilde{\mv{y}}={\mv{u}}_k^H\tilde{\mv{H}}{\mv{x}}+n_k,\label{SMchannel}
\end{align}
where ${\mv{u}}_k^H\tilde{\mv{H}}$ is the effective MISO channel from the transmitter to the combiner output of the $k$th data stream, and $n_k={\mv{u}}_k^H\tilde{\mv{n}}$, with $n_k\sim \mathcal{CN}(0,\sigma^2)$. Let $d_k\overset{\Delta}{=} {\mv{u}}_k^H\tilde{\mv{H}}{\mv{x}}= \sqrt{\frac{P}{M_t}}{\mv{u}}_k^H\tilde{\mv{H}}\left[e^{j\theta_1},...,e^{j\theta_{M_t}}\right]^T$ denote the noise-free signal received at the $k$th data stream. Note that $d_k$'s are coupled with all $\theta_i$'s, which introduces the following challenges to the transceiver design:
\begin{itemize}
\item First, note that with given $\{{\mv{u}}_k\}_{k=1}^K$, $\mathcal{S}$ is feasible for the $K$ data streams if and only if there exists a set of scaling factors $\{\alpha_k:\alpha_k>0\}_{k=1}^K$, such that the following problem is feasible for any $\{s_k:\ s_k\in \mathcal{S}\}_{k=1}^K$:
    \vspace{-1mm}\begin{align}
    \mathrm{find}\quad &\{\theta_i:\theta_i\in [0,2\pi)\}_{i=1}^{M_t}\label{SM_S}\\
    \mathrm{s.t.}\quad &d_k=\alpha_k s_k,\quad k=1,...,K.\nonumber
    \end{align}
However, this condition is in general difficult to verify when $K>1$. Specifically, it is hard to check the feasibility of Problem (\ref{SM_S}) for given $\{\alpha_k,s_k\}_{k=1}^K$, since the jointly feasible region for $\{d_k\}_{k=1}^K$ with $\theta_i\in [0,2\pi),\ \forall i$ is difficult to characterize.\footnote{Specifically, although the marginally feasible region of each $d_k$ can be shown to be still an annular region (same as the case of single-stream transmission), $\{d_k\}_{k=1}^K$ from all $K$ data streams may not be jointly feasible with each $d_k$ arbitrarily drawn from its corresponding annular region.}
\item Second, even assuming $\mathcal{S}$ is verified to be feasible with given $\{{\mv{u}}_k\}_{k=1}^K$, it is hard to find the mapping from desired $\{\alpha_k,s_k\}_{k=1}^K$ to the transmitted signal phases $\{\theta_i\}_{i=1}^{M_t}$ by solving Problem (\ref{SM_S}), which is a non-convex problem and is more difficult to solve than Problem (\ref{BF_S}) for the case of $K=1$.\footnote{Note that for a massive MIMO system with sufficiently large $M_t$, it can be shown that any desired constellations for the $K$ data streams are jointly feasible regardless of the channel realization, and Problem (\ref{SM_S}) can be solved via algorithms proposed in e.g., \cite{MUCE}. However, this is due to the excessive degrees of freedom available at the transmitter, and is in general not true for finite value of $M_t$.}
\item Third, note that both the feasibility of $\mathcal{S}$ and the CE precoding design depend on the receive beamforming vectors $\{{\mv{u}}_k\}_{k=1}^K$. However, due to the lack of effective methods to deal with the above problems, it is difficult to formulate a problem to optimize $\{{\mv{u}}_k\}_{k=1}^K$ directly.
\end{itemize}

To overcome these challenges, we propose a new scheme that decouples the CE precoding design for the $K$ data streams, by adopting \emph{antenna grouping} at the transmitter. Specifically, the transmit antennas are divided into $K$ groups with equal size $\frac{M_t}{K}$, each assigned to the transmission of one data stream. For the purpose of exposition, we assume $\frac{M_t}{K}$ is an integer in the sequel. Let $\tilde{\mv{H}}_k\in \mathbb{C}^{M_r\times\frac{M_t}{K}}$ denote the channel matrix from the transmit antennas in the $k$th group to the receiver. For convenience of illustration, we assume the grouping is based on antenna index, i.e., the first group consists of transmit antennas with indices $1$ to $\frac{M_t}{K}$, and so on, which yields $\tilde{\mv{H}}_k=\left[\tilde{\mv{h}}_{\frac{(k-1)M_t}{K}+1},...,\tilde{\mv{h}}_{\frac{kM_t}{K}}\right]$, with $\tilde{\mv{h}}_i$ denoting the $i$th column vector of $\tilde{\mv{H}}$.\footnote{Note that the results are directly extendible to other transmit antenna grouping cases, which will be considered later in Section IV.} Let ${\mv{x}}_k=\left[x_{\frac{(k-1)M_t}{K}+1},...,x_{\frac{k M_t}{K}}\right]^T$ denote the transmitted signal vector for the $k$th group. (\ref{SMchannel}) can be thus rewritten as
\vspace{-1mm}\begin{align}
y_k={\mv{u}}_k^H\tilde{\mv{H}}_k{\mv{x}}_k+{\mv{u}}_k^H\tilde{\mv{H}}_{[-k]}{\mv{x}}_{[-k]}+n_k,\label{SMchannel_new}
\end{align}
where $\tilde{\mv{H}}_{[-k]}=[\tilde{\mv{H}}_1,...,\tilde{\mv{H}}_{k-1},\tilde{\mv{H}}_{k+1},...,\tilde{\mv{H}}_K]$, ${\mv{x}}_{[-k]}=[{\mv{x}}_1^T,...,{\mv{x}}_{k-1}^T,{\mv{x}}_{k+1}^T,...,{\mv{x}}_K^T]^T$. Note that the second term at the right-hand side (RHS) of (\ref{SMchannel_new}) denotes the interference at the $k$th data stream from non-intended transmit antenna groups.

With (\ref{SMchannel_new}), we redefine $d_k={\mv{u}}_k^H\tilde{\mv{H}}_k{\mv{x}}_k$ as the interference-plus-noise-free received signal at the $k$th data stream. The feasible region of $d_k$ is denoted by $\mathcal{D}_k({\mv{u}}_k)$, which is similarly defined as (\ref{BFregion}) for the case of $K=1$. Notice that the set of $d_k$'s as well as the set of $\mathcal{D}_k({\mv{u}}_k)$'s are now decoupled. Therefore, by following similar procedures as in the previous single-stream case, the feasibilities of $\mathcal{S}$ for the $K$ data streams can be separately verified based on $\mathcal{D}_k({\mv{u}}_k)$'s. In addition, given any feasible $\mathcal{S}$ for the $K$ data streams and the corresponding $\{\alpha_k,s_k\}_{k=1}^K$, Problem (\ref{SM_S}) can now be solved by finding each ${\mv{x}}_k$ that yields $d_k=\alpha_ks_k$ separately for all $k$'s; thus (\ref{SMchannel_new}) is equivalently represented by
\vspace{-1mm}\begin{align}
y_k=\alpha_ks_k+{\mv{u}}_k^H\tilde{\mv{H}}_{[-k]}{\mv{x}}_{[-k]}+n_k,\quad s_k\in \mathcal{S}.\label{SMchannel_new2}
\end{align}
Moreover, there are in general two design criteria for $\{{\mv{u}}_k\}_{k=1}^K$ depending on how the interference term in (\ref{SMchannel_new2}) is treated, namely, \emph{MMSE} and \emph{ZF}. Details of the MMSE and ZF based receive beamforming will be presented in Section IV, where the joint design of the transmit antenna grouping and receive beamforming will be addressed as well.
\vspace{-2mm}
\section{Receiver Optimization for Single-Stream Transmission}
\subsection{Problem Formulation}
For single-stream transmission, our objective is to minimize the SER at the combiner output by optimizing the receive beamforming vector ${\mv{u}}$ for given $\tilde{\mv{H}}$ and $\mathcal{S}$. Note that since minimizing the exact SER, $P_s$, is in general a difficult problem, we aim to minimize its union bound instead. We assume $\mathcal{S}$ is an equiprobable signal set and maximum likelihood (ML) detection is used at the combiner output to recover the signal point in $\mathcal{S}$. Without loss of generality, we further assume $\underset{s\in \mathcal{S}}{\max}|s|=1$ for the rest of this paper. The union bound of $P_s$ is thus given by
\vspace{-1mm}\begin{equation}\label{unionbound}
P_s\leq (N-1)Q\left(\sqrt{\frac{(d_\mathrm{min}^\mathrm{c})^2}{2\sigma^2}}\right),
\end{equation}
where $d_{\mathrm{min}}^\mathrm{c}=R({\mv{u}})d_{\mathrm{min}}$ denotes the MED between any two signal points in the scaled constellation $R({\mv{u}})\mathcal{S}$ at the combiner output, with $d_{\mathrm{min}}$ denoting the MED of $\mathcal{S}$ \cite{digicom}. As can be observed from (\ref{unionbound}), minimizing the union bound of $P_s$ is equivalent to maximizing $d_{\mathrm{min}}^\mathrm{c}$, for which we formulate the following optimization problem with given $\tilde{\mv{H}}$ and $\mathcal{S}$ as
\vspace{-3mm}\begin{align}
(\mbox{P1})\quad \underset{{\mv{u}}}{\max}\quad &\|{\mv{u}}^H\tilde{\mv{H}}\|_1\\
\mathrm{s.t.}\quad & \|{\mv{u}}\|_2=1\\ &\frac{\max\left\{2\|{\mv{u}}^H\tilde{\mv{H}}\|_\infty-\|{\mv{u}}^H\tilde{\mv{H}}\|_1,0\right\}}{\|{\mv{u}}^H\tilde{\mv{H}}\|_1}\leq \tau,\label{P1feas}
\end{align}
where $\tau=\underset{s\in \mathcal{S}}{\min} |s|\in [0,1]$, and the feasibility constraint of $\mathcal{S}$ given in (\ref{feascond}) is explicitly expressed in (\ref{P1feas}).

Problem (P1) can be equivalently rewritten as
\vspace{-3mm}\begin{align}
(\mbox{P2})\quad \underset{{\mv{u}}}{\max}\quad &\|{\mv{u}}^H\tilde{\mv{H}}\|_1\\
\mathrm{s.t.}\quad & \|{\mv{u}}\|_2\leq1\label{P2c1}\\
& \|{\mv{u}}^H\tilde{\mv{H}}\|_\infty\leq \frac{\tau+1}{2}\|{\mv{u}}^H\tilde{\mv{H}}\|_1\label{P2c2}\\
& \|{\mv{u}}^H\tilde{\mv{H}}\|_1>0,\label{P2c3}
\end{align}
since it can be shown that ${\mv{u}}^\star$ is optimal for Problem (P1) if and only if ${\mv{u}}^\star$ is the optimal solution to Problem (P2), by noting that the constraint in (\ref{P2c1}) must be satisfied with equality by the optimal solution to Problem (P2).

Note that Problem (P2) is a non-convex optimization problem since the constraints in (\ref{P2c2}) and (\ref{P2c3}) are non-convex. It is also worth noting that Problem (P2) without the constraint in (\ref{P2c2}) can be shown to be equivalent to the class of unimodular quadratic programs (UQPs) that are known to be NP-hard \cite{Huang06}. Moreover, it is non-trivial to extend the existing approaches for finding approximate solutions to the UQPs (e.g., algorithms based on SDR \cite{Huang06,Luo10} or fixed-point iterations \cite{Stoica14}) to the case of Problem (P2), due to the new non-convex constraint in (\ref{P2c2}). As a result, Problem (P2) is in general a difficult problem to solve.

In the following, we first study the feasibility of Problem (P2). Then, we provide an efficient algorithm based on SDR to find an approximate solution for this problem.
\vspace{-3mm}
\subsection{Feasibility of Problem (P2)}
The feasibility of Problem (P2) can be verified by solving the following problem:
\vspace{-2mm}\begin{align}
(\mbox{P2-F})\quad \mathrm{find}&\quad{\mv{u}}\\
\mathrm{s.t.}&\quad \|{\mv{u}}^H\tilde{\mv{H}}\|_\infty\leq \frac{\tau+1}{2}\|{\mv{u}}^H\tilde{\mv{H}}\|_1\label{P2f1}\\
&\quad  \|{\mv{u}}^H{\tilde{\mv{H}}}\|_1>0. \label{P2f2}
\end{align}
Specifically, any feasible solution to Problem (P2) is also a feasible solution to Problem (P2-F); on the other hand, for any feasible solution ${\mv{u}}$ to Problem (P2-F), $\frac{{\mv{u}}}{\|{\mv{u}}\|_2}$ is a feasible solution to Problem (P2).
Although Problem (P2-F) is in general difficult to solve due to the non-convex constraints, useful insights can be drawn by investigating its structure, as shown in the following proposition.

\begin{proposition}\label{prop_BFfeas}
Problem (P2) is feasible if $\mathrm{rank}(\tilde{\mv{H}})\geq 2$.
\end{proposition}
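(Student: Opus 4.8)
The plan is to exhibit, for any channel matrix $\tilde{\mv{H}}$ with $\mathrm{rank}(\tilde{\mv{H}})\geq 2$, an explicit feasible point for Problem (P2-F), which by the remark preceding the proposition immediately certifies feasibility of Problem (P2). The key observation is that the constraint in (\ref{P2f1}) becomes easy to satisfy when the effective MISO channel vector ${\mv{g}}^H\overset{\Delta}{=}{\mv{u}}^H\tilde{\mv{H}}$ has at least two nonzero entries of comparable magnitude: if all entries of ${\mv{g}}$ had equal absolute value, then $\|{\mv{g}}\|_\infty/\|{\mv{g}}\|_1 = 1/M_t \leq 1/2 \leq (\tau+1)/2$, so (\ref{P2f1}) would hold with room to spare. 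So the real task is to show that when $\mathrm{rank}(\tilde{\mv{H}})\geq 2$ we can choose ${\mv{u}}$ (with ${\mv{u}}^H\tilde{\mv{H}}\neq{\mv{0}}$, so that (\ref{P2f2}) holds) making ${\mv{u}}^H\tilde{\mv{H}}$ "flat enough'' — more precisely, with $\|{\mv{u}}^H\tilde{\mv{H}}\|_\infty \leq \frac{1}{2}\|{\mv{u}}^H\tilde{\mv{H}}\|_1$, which suffices since $\tau\geq 0$.

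First I would reduce to a two-dimensional row space. Since $\mathrm{rank}(\tilde{\mv{H}})\geq 2$, the row space of $\tilde{\mv{H}}$ — equivalently $\{\tilde{\mv{H}}^H{\mv{u}}: {\mv{u}}\in\mathbb{C}^{M_r\times 1}\}^*$, the set of achievable ${\mv{g}}^H={\mv{u}}^H\tilde{\mv{H}}$ — contains two linearly independent vectors ${\mv{a}}^H$ and ${\mv{b}}^H$ in $\mathbb{C}^{1\times M_t}$. Then for any $\lambda\in\mathbb{C}$, the vector ${\mv{a}}^H + \lambda{\mv{b}}^H$ is also achievable by some ${\mv{u}}$, and it is never the zero vector (by linear independence), so (\ref{P2f2}) is automatic for every such choice. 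It therefore remains to find $\lambda$ such that ${\mv{g}}^H = {\mv{a}}^H + \lambda{\mv{b}}^H$ satisfies $\max_i |g_i| \leq \frac12\sum_i |g_i|$, i.e., the largest coordinate magnitude is at most the sum of the others.

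The main step — and the place I expect to spend the most care — is this one-parameter balancing argument. I would argue by a continuity/intermediate-value style argument on $t = |\lambda|$ (or on the argument of $\lambda$). As $|\lambda|\to\infty$ the coordinate profile of ${\mv{g}}$ approaches that of ${\mv{b}}$ up to scaling; as $\lambda\to 0$ it approaches that of ${\mv{a}}$. Since ${\mv{a}}$ and ${\mv{b}}$ are linearly independent they cannot both be multiples of a single standard basis vector ${\mv{e}}_i$, so at least one "switches'' the identity of the dominant coordinate as $\lambda$ varies, or else the dominant coordinate's relative weight $\max_i|g_i|/\|{\mv{g}}\|_1$ is driven strictly below $1$. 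A clean way to make this rigorous: define $f(\lambda) = 2\max_i|g_i(\lambda)| - \|{\mv{g}}(\lambda)\|_1$, a continuous function of $\lambda$, homogeneous of degree $1$ in the pair $(1,\lambda)$ in a suitable sense; show that it cannot be strictly positive on the whole complex line, by noting that if the dominant index were fixed at some $i$ for all $\lambda$ along a ray, then $2|g_i(\lambda)| > \sum_{j\neq i}|g_j(\lambda)|$ for all $\lambda$ would force ${\mv{a}}$ and ${\mv{b}}$ to be proportional (each essentially supported on coordinate $i$ in the limit), contradicting independence. Hence $f(\lambda)\leq 0$ for some $\lambda$, giving a feasible ${\mv{g}}$ and hence a feasible ${\mv{u}}$ for (P2-F), and therefore $\frac{{\mv{u}}}{\|{\mv{u}}\|_2}$ is feasible for (P2). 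The delicate point is handling the case where the dominant coordinate index stays constant but the relative weight still dips to $1/2$ or below; I would cover it by the explicit computation that along the ray $\lambda = t\,e^{j\phi}$ with $\phi$ chosen to align $b_i$ against $a_i$ in phase, $|g_i(t)|$ can be made small (even zero) at a finite $t$ while some other coordinate stays bounded away from $0$, again using independence to guarantee such a coordinate exists.
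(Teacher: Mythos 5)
Your route is genuinely different from the paper's, and it can be made to work, but not exactly as written. For comparison: the paper needs no continuity argument at all. It fixes an index $l$ and takes any $\bar{\mv{u}}_l$ satisfying the single linear equation $\bar{\mv{u}}_l^H(2\tilde{\mv{h}}_l-\sum_{i=1}^{M_t}\tilde{\mv{h}}_i)=0$ together with $\|\bar{\mv{u}}_l^H\tilde{\mv{H}}\|_1>0$; such a vector exists precisely because $\mathrm{rank}(\tilde{\mv{H}})\geq 2$ (otherwise the null space of the rank-one matrix $(2\tilde{\mv{h}}_l-\sum_i\tilde{\mv{h}}_i)^H$ would be contained in that of $\tilde{\mv{H}}^H$), and then the triangle inequality applied to $\bar{\mv{u}}_l^H\tilde{\mv{h}}_l=\sum_{i\neq l}\bar{\mv{u}}_l^H\tilde{\mv{h}}_i$ immediately yields $\|\bar{\mv{u}}_l^H\tilde{\mv{H}}\|_\infty\leq\frac{1}{2}\|\bar{\mv{u}}_l^H\tilde{\mv{H}}\|_1$, i.e., constraint (\ref{P2f1}) with room to spare since $\tau\geq 0$. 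Your reduction is sound up to the balancing step: it does suffice to exhibit a nonzero achievable $\mv{g}(\lambda)=\mv{a}+\lambda\mv{b}$ with $2\|\mv{g}\|_\infty\leq\|\mv{g}\|_1$, and $\mv{g}(\lambda)\neq\mv{0}$ for every $\lambda$ by linear independence, so (\ref{P2f2}) is automatic.

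The gap is in the central contradiction as you state it. The claim that a dominant index fixed along a ray ``would force $\mv{a}$ and $\mv{b}$ to be proportional'' is false: the endpoints $\lambda\to 0$ and $|\lambda|\to\infty$ only tell you that the $i$th entry of $\mv{a}$ and of $\mv{b}$ each (weakly) dominates the sum of the moduli of its remaining entries, and two such vectors are easily independent. For instance, with $\mv{a}=(1,\epsilon,0)^T$ and $\mv{b}=(1,0,\epsilon)^T$, $0<\epsilon<1$, along the ray $\lambda=t\geq 0$ the first coordinate dominates and $f(\lambda)>0$ for every $t$, with no contradiction. The contradiction must instead come from the cancellation idea you relegate to the ``delicate point,'' organized as follows: assuming $f(\lambda)>0$ for all $\lambda\in\mathbb{C}$, the argmax index is unique at every $\lambda$ and locally constant, hence constant (say $i$) on the connected set $\mathbb{C}$ --- this pinning step must precede the choice of ray; then, if $b_i\neq 0$, at $\lambda^\star=-a_i/b_i$ you have $g_i(\lambda^\star)=0$ while $\mv{g}(\lambda^\star)\neq\mv{0}$, so $i$ cannot be the dominant index there, a contradiction. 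You also leave out the case $b_i=0$, where no finite $\lambda$ shrinks $|g_i|$ (it is constant at $|a_i|$, so ``$|g_i(t)|$ can be made small'' fails); there you need the separate observation that $\mv{b}\neq\mv{0}$ forces $\|\mv{g}(\lambda)\|_1\to\infty$ as $|\lambda|\to\infty$, so the dominance inequality fails for large $|\lambda|$. With these two repairs your argument closes, though it remains considerably longer than the paper's one-step construction.
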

\begin{proof}
Please refer to Appendix \ref{proof_prop_BFfeas}.
\end{proof}

Based on Proposition \ref{prop_BFfeas}, Problem (P2) is always feasible under the assumed i.i.d. Rayleigh fading MIMO channel with $M_t,M_r\geq 2$. Moreover, we provide the following lemma.

\begin{lemma}\label{lemma_P2feas}
The optimal ${\mv{u}}_f$ to the following problem is a feasible solution to Problem (P2):
\begin{align}
(\mbox{P2-FS})\quad \underset{\scriptstyle\|{\mv{u}}_f\|_2\leq 1\atop \scriptstyle j\in \{1,...,M_t\}}{\max}\quad &\mathfrak{Re}\left\{{\mv{u}}_f^H\sum_{i=1}^{M_t}\tilde{\mv{h}}_i\right\}\\
\mathrm{s.t.}\quad &{\mv{u}}_f^H\left(2\tilde{\mv{h}}_j-\sum_{i=1}^{M_t}\tilde{\mv{h}}_i\right)=0.\label{P2FS1}
\end{align}\vspace{-1mm}
\end{lemma}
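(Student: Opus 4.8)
The plan is to verify directly that the optimal $\mv{u}_f$ of (P2-FS) meets each of the three constraints (\ref{P2c1}), (\ref{P2c2}) and (\ref{P2c3}) that define the feasible set of Problem (P2). Throughout I would write $g_i=\mv{u}_f^H\tilde{\mv{h}}_i$ for $i=1,\dots,M_t$ (so that $\|\mv{u}_f^H\tilde{\mv{H}}\|_1=\sum_i|g_i|$ and $\|\mv{u}_f^H\tilde{\mv{H}}\|_\infty=\max_i|g_i|$) and $\mv{b}=\sum_{i=1}^{M_t}\tilde{\mv{h}}_i$, and let $j^\star$ be the index attaining the maximum in (P2-FS), so that the equality constraint (\ref{P2FS1}) becomes $2g_{j^\star}=\sum_{i=1}^{M_t}g_i$. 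Constraint (\ref{P2c1}) then holds for free, since $\|\mv{u}_f\|_2\le 1$ is imposed in (P2-FS).

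The content for (\ref{P2c2}) is an elementary consequence of the single linear identity $2g_{j^\star}=\sum_i g_i$. For the index $j^\star$ itself, $|g_{j^\star}|=\frac{1}{2}\bigl|\sum_i g_i\bigr|\le\frac{1}{2}\sum_i|g_i|$, so $2|g_{j^\star}|\le\sum_i|g_i|$. For any $l\neq j^\star$, the same identity gives $g_l=g_{j^\star}-\sum_{i\neq l,\,i\neq j^\star}g_i$, and the triangle inequality yields $|g_l|\le|g_{j^\star}|+\sum_{i\neq l,\,i\neq j^\star}|g_i|=\sum_{i\neq l}|g_i|$, i.e.\ again $2|g_l|\le\sum_i|g_i|$. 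Hence $2|g_l|\le\sum_i|g_i|$ for \emph{every} $l$, so $\|\mv{u}_f^H\tilde{\mv{H}}\|_\infty\le\frac{1}{2}\|\mv{u}_f^H\tilde{\mv{H}}\|_1\le\frac{\tau+1}{2}\|\mv{u}_f^H\tilde{\mv{H}}\|_1$, the last inequality because $\tau\ge 0$. (This step uses only that $\mv{u}_f$ obeys (\ref{P2FS1}), not that it is optimal.)

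It remains to check (\ref{P2c3}), i.e.\ $\|\mv{u}_f^H\tilde{\mv{H}}\|_1>0$, and this is the only nonroutine part. I would reduce it to showing that the optimal value $V^\star$ of (P2-FS) is strictly positive, since $\|\mv{u}_f^H\tilde{\mv{H}}\|_1=\sum_i|g_i|\ge\bigl|\sum_i g_i\bigr|\ge\mathfrak{Re}\{\mv{u}_f^H\mv{b}\}=V^\star$. To prove $V^\star>0$: the feasible set of (P2-FS) for each fixed $j$ is compact and nonempty ($\mv{u}_f=\mv{0}$ works) and there are finitely many $j$, so the maximum is attained; moreover, because $\mathrm{rank}(\tilde{\mv{H}})\ge 2$ and $\mv{b}\neq\mv{0}$, some column $\tilde{\mv{h}}_j$ is nonzero and not collinear with $\mv{b}$ — otherwise all columns of $\tilde{\mv{H}}$ would lie in $\mathrm{span}\{\mv{b}\}$, forcing $\mathrm{rank}(\tilde{\mv{H}})\le1$. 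For such a $j$, $\mv{b}\notin\mathrm{span}\{2\tilde{\mv{h}}_j-\mv{b}\}$ (if $\mv{b}=c(2\tilde{\mv{h}}_j-\mv{b})$ then $(1+c)\mv{b}=2c\tilde{\mv{h}}_j$, which would make $\mv{b}$ and $\tilde{\mv{h}}_j$ collinear), so the hyperplane $\{\mv{v}:\mv{v}^H(2\tilde{\mv{h}}_j-\mv{b})=0\}$ is not contained in $\{\mv{v}:\mv{v}^H\mv{b}=0\}$; picking a unit vector $\mv{v}_0$ in the former but not the latter and multiplying it by a suitable unit-modulus scalar gives $\mathfrak{Re}\{\mv{v}_0^H\mv{b}\}=|\mv{v}_0^H\mv{b}|>0$, so $(\mv{v}_0,j)$ is feasible for (P2-FS) with positive objective and therefore $V^\star>0$. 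The hard part is precisely this: plain feasibility of (P2-FS) is trivial and uninformative (the zero vector is feasible), so one genuinely has to produce a feasible $\mv{u}_f$ that is \emph{not} orthogonal to $\mv{b}=\sum_i\tilde{\mv{h}}_i$, and this is the only place the hypothesis $\mathrm{rank}(\tilde{\mv{H}})\ge2$ is used — the statement can fail at rank $1$. The harmless side condition $\mv{b}\neq\mv{0}$ holds with probability one under the i.i.d.\ Rayleigh model assumed here, so the argument is run on that full-measure event.
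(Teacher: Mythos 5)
Your proof is correct and follows essentially the same route as the paper's: the triangle-inequality argument showing that the single constraint (\ref{P2FS1}) already forces $\|\mv{u}_f^H\tilde{\mv{H}}\|_\infty\le\tfrac{1}{2}\|\mv{u}_f^H\tilde{\mv{H}}\|_1$ is exactly the one from the proof of Proposition \ref{prop_BFfeas} (which the paper's proof simply cites), and your positivity step is the constructive contrapositive of the paper's contradiction argument (optimal value zero for every $j$ $\Rightarrow$ null-space containments $\Rightarrow$ $\mathrm{rank}(\tilde{\mv{H}})=1$), resting on the same linear-algebra fact. If anything, you are slightly more careful in flagging the degenerate case $\sum_{i=1}^{M_t}\tilde{\mv{h}}_i=\mv{0}$ (a probability-zero event under the assumed fading model), which the paper's null-space deduction implicitly excludes as well.
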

\begin{proof}
Please refer to Appendix \ref{proof_lemma_P2feas}.
\end{proof}

Based on Lemma \ref{lemma_P2feas}, a feasible solution to Problem (P2) can be obtained by solving Problem (P2-FS). Note that Problem (P2-FS) with any given $j\in\{1,...,M_t\}$ is a convex optimization problem, which can be efficiently solved via existing software, e.g., CVX \cite{cvx}. Thus, by first finding the optimal ${\mv{u}}_f$ for each given $j$, the globally optimal ${\mv{u}}_f$ can be easily obtained via one-dimensional search over $j$.

\subsection{Proposed Solution to Problem (P2)}
In the following, we study how to solve Problem (P2). First, we introduce an auxiliary vector ${\mv{p}}\in \mathbb{C}^{M_t\times1}$ with $|p_i|=1,\ i=1,...,M_t$. The objective function of Problem (P2) can be shown to be equivalently given by
\begin{equation}
\|{\mv{u}}^H\tilde{\mv{H}}\|_1=\underset{\ |p_i|=1,\ i=1,...,M_t}{\max}\ \mathfrak{Re}\left\{{\mv{u}}^H\tilde{\mv{H}}\mv{p}\right\}.\label{equiobj}
\end{equation}
For any given ${\mv{u}}$, denote ${\mv{p}}^\star({\mv{u}})$ as the optimal solution to the problem on the RHS of (\ref{equiobj}), whose elements can be shown to be given by
\begin{align}
p_i^\star({\mv{u}})=e^{-j\arg\left\{{\mv{u}}^H\tilde{\mv{h}}_i\right\}},\quad i=1,...,M_t.\label{optp}
\end{align}

With (\ref{equiobj}) and (\ref{optp}), we have the following proposition.
\begin{proposition}\label{prop_P2equi}
Problem (P2) is equivalent to the following problem:
\begin{align}
\mbox{(P3)}\quad\underset{{\mv{u}},{\mv{p}}}{\max}\quad &\mathfrak{Re}\left\{{\mv{u}}^H\tilde{\mv{H}}\mv{p}\right\}\label{P3obj}\\
\mathrm{s.t.}\quad &\|{\mv{u}}\|_2\leq 1\label{P3c1}\\
&\|{\mv{u}}^H\tilde{\mv{H}}\|_\infty\leq\frac{\tau+1}{2}\mathfrak{Re}\left\{{\mv{u}}^H\tilde{\mv{H}}{\mv{p}}\right\}\label{P3c2}\\
&|p_i|=1,\quad i=1,...,M_t.\label{P3c3}
\end{align}
\end{proposition}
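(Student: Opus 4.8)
The plan is to show that Problems (P2) and (P3) have the same optimal value and that optimal solutions of one yield optimal solutions of the other, by exploiting the variational identity in (\ref{equiobj}) together with the explicit form of $\mv{p}^\star(\mv{u})$ in (\ref{optp}). First I would establish that the feasible sets are ``compatible'' in the following sense: for any $\mv{u}$ feasible for (P2), the pair $(\mv{u},\mv{p}^\star(\mv{u}))$ is feasible for (P3). Indeed, (\ref{P3c1}) coincides with (\ref{P2c1}); the unimodularity constraint (\ref{P3c3}) holds by construction of $\mv{p}^\star(\mv{u})$; and by (\ref{equiobj})--(\ref{optp}) we have $\mathfrak{Re}\{\mv{u}^H\tilde{\mv{H}}\mv{p}^\star(\mv{u})\}=\|\mv{u}^H\tilde{\mv{H}}\|_1>0$, so (\ref{P2c3}) becomes (\ref{P2c2}) becomes exactly (\ref{P3c2}). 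Hence the (P3)-objective at $(\mv{u},\mv{p}^\star(\mv{u}))$ equals $\|\mv{u}^H\tilde{\mv{H}}\|_1$, the (P2)-objective at $\mv{u}$, which shows the optimal value of (P3) is at least that of (P2).

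Next I would argue the reverse inequality. Take any $(\mv{u},\mv{p})$ feasible for (P3). From (\ref{P3c2}) and non-negativity of the left-hand side we get $\mathfrak{Re}\{\mv{u}^H\tilde{\mv{H}}\mv{p}\}\ge 0$, and in fact $>0$ unless $\mv{u}^H\tilde{\mv{H}}=\mv{0}$; I would note that $\mv{u}^H\tilde{\mv{H}}=\mv{0}$ forces the objective to be $0$, so such points are never optimal (optimality is witnessed by the feasible point from Lemma~\ref{lemma_P2feas}, which gives a strictly positive objective whenever $\mathrm{rank}(\tilde{\mv H})\ge 2$; otherwise, if the objective value of (P3) were $0$ it would still match (P2)). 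For a feasible $(\mv{u},\mv{p})$ with positive objective, the key observation is that replacing $\mv{p}$ by $\mv{p}^\star(\mv{u})$ can only (weakly) increase the objective, by (\ref{equiobj}); and it preserves feasibility, since the right-hand side of (\ref{P3c2}) only increases while the left-hand side is unchanged, and (\ref{P3c1}), (\ref{P3c3}) are unaffected. So without loss of optimality $\mv{p}=\mv{p}^\star(\mv{u})$, at which point $\mathfrak{Re}\{\mv{u}^H\tilde{\mv{H}}\mv{p}\}=\|\mv{u}^H\tilde{\mv{H}}\|_1$ and (\ref{P3c2}) reduces precisely to (\ref{P2c2}); thus $\mv{u}$ is feasible for (P2) with the same objective value. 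This gives the matching inequality and simultaneously the correspondence of optimal solutions.

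I expect the main obstacle to be the careful bookkeeping around the strict inequality (\ref{P2c3}) / positivity of $\mathfrak{Re}\{\mv{u}^H\tilde{\mv{H}}\mv{p}\}$, and the edge case $\mv{u}^H\tilde{\mv{H}}=\mv{0}$: one must make sure that in passing from (P3) back to (P2) the constraint (\ref{P3c2}), which in (P3) implicitly enforces $\mathfrak{Re}\{\mv{u}^H\tilde{\mv{H}}\mv{p}\}\ge 0$ rather than the sharp $>0$ of (\ref{P2c3}), does not admit spurious optima. The clean way to dispatch this is to invoke Proposition~\ref{prop_BFfeas} and Lemma~\ref{lemma_P2feas}: under the standing assumption $\mathrm{rank}(\tilde{\mv{H}})\ge 2$ both problems have a strictly positive optimal value, so any optimizer of (P3) automatically satisfies $\mathfrak{Re}\{\mv{u}^H\tilde{\mv{H}}\mv{p}\}>0$, hence $\|\mv{u}^H\tilde{\mv{H}}\|_1\ge\mathfrak{Re}\{\mv{u}^H\tilde{\mv{H}}\mv{p}\}>0$, and the reduction goes through. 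The remaining steps — verifying that (\ref{optp}) indeed maximizes $\mathfrak{Re}\{\mv{u}^H\tilde{\mv{H}}\mv{p}\}$ over unimodular $\mv{p}$, which is the standard phase-alignment argument term by term — are routine and I would state them without detailed calculation.
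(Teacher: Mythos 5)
Your proposal is correct and follows essentially the same route as the paper's proof: map any feasible $\mv{u}$ of (P2) to $(\mv{u},\mv{p}^\star(\mv{u}))$ for one direction, and for the converse replace $\mv{p}$ by $\mv{p}^\star(\mv{u})$ without loss of optimality while invoking Lemma~\ref{lemma_P2feas} to guarantee a strictly positive optimal value so that the constraint \eqref{P2c3} and the reduction of \eqref{P3c2} to \eqref{P2c2} go through. Your explicit check that this replacement preserves feasibility of \eqref{P3c2} is a detail the paper leaves implicit, but the argument is the same.
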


\begin{proof}
Please refer to Appendix \ref{proof_prop_P2equi}.
\end{proof}

Problem (P3) can be shown to be a non-convex QCQP. Next, we propose a customized SDR-based algorithm for solving it. Specifically, we define ${\mv{w}}=[{\mv{u}}^T\ {\mv{p}}^T]^T\in \mathbb{C}^{(M_r+M_t)\times1}$, ${\mv{W}}={\mv{ww}}^H$ and formulate the following problem:
\begin{align}
\mbox{(P3-SDR)}
\quad\underset{{\mv{W}}}{\max}\quad &\mathfrak{Re}\{\mathrm{tr}(\mv{AWG})\}\\
\mathrm{s.t.}\quad& \mathrm{tr}({\mv{W}})\leq M_t+1\\
%&\left|{\mv{e}}_i^T{\mv{AWg}}_i\right|\leq \frac{\tau+1}{2}\mathfrak{Re}\{\mathrm{tr}(\mv{AWG})\},\nonumber \\&\qquad \qquad \qquad \qquad \qquad \quad\ \ i=1,...,M_t\label{P3SDRc1}\\
&\left|{\mv{e}}_i^T{\mv{AWg}}_i\right|\leq \frac{\tau+1}{2}\mathfrak{Re}\{\mathrm{tr}(\mv{AWG})\},\quad i=1,...,M_t\label{P3SDRc1}\\
&[\mv{W}]_{i,i}=1,\quad\ \ i=M_r+1,...,M_r+M_t\\
& \mv{W}\succeq \mv{0},
\end{align}
where ${\mv{A}}=[\mv{0}_{M_t\times M_r}\ \mv{I}_{M_t}]$; $\mv{G}=\left[\tilde{\mv{H}}^T\ {\mv{0}}^T_{M_t\times M_t}\right]^T$ with the $i$th column vector denoted by ${\mv{g}}_i$; ${\mv{e}}_i$ denotes the $i$th column vector of $\mv{I}_{M_t}$. It can be shown that Problem (P3) is equivalent to Problem (P3-SDR) with the additional constraint of $\mathrm{rank}(\mv{W})=1$. Therefore, the optimal value of Problem (P3-SDR) is in general an upper bound on those of Problem (P3) and Problem (P2).

Problem (P3-SDR) is a semi-definite program (SDP), which can be efficiently solved via existing software, e.g. CVX \cite{cvx}. Let ${\mv{W}}^\star$ and $({\mv{u}}^\star,{\mv{p}}^\star)$ denote the optimal solutions to Problem (P3-SDR) and Problem (P3), respectively. If $\mathrm{rank}(\mv{W}^\star)=1$, our relaxation is tight and ${\mv{w}}^\star=[{\mv{u}}^{\star T}\ {\mv{p}}^{\star T}]^T$ can be obtained from the eigenvalue decomposition (EVD) of $\mv{W}^\star$. The optimal solution to Problem (P2) is thus obtained as ${\mv{u}}^\star$. Otherwise, for the case of $\mathrm{rank}(\mv{W}^\star)>1$, we aim to extract an approximate solution to Problem (P2) from ${\mv{W}}^\star$, for which a commonly adopted approach is via the so-called Gaussian randomization method (see e.g., \cite{Luo10} and references therein). By customizing this method to our problem, we propose two randomization algorithms denoted by $\mathrm{Rand}_u$ and $\mathrm{Rand}_p$, which are summarized in Algorithm \ref{randu} and Algorithm \ref{randp}, respectively.
\begin{algorithm}[!htb]
\caption{$\mathrm{Rand}_u$}\label{randu}
\SetKwData{Index}{Index}
\KwIn{$\mv{W}^\star$, $\tilde{\mv{H}}$, $\tau$, $L_u$}
\KwOut{$\tilde{\mv{{u}}}_u$}
Obtain ${\mv{W}}_u^\star\in \mathbb{C}^{M_r\times M_r}$ by $[{\mv{W}}_u^\star]_{i,j}=[{\mv{W}^\star}]_{i,j},\ i=1,...,M_r,\ j=1,...,M_r$.

\eIf{$\mathrm{rank}({\mv{W}}_u^\star)=1$}{Obtain $\tilde{\mv{u}}_u$ by ${\mv{W}}_u^\star=\tilde{\mv{u}}_u\tilde{\mv{u}}_u^H$.\\
\If{$\tilde{\mv{u}}_u$ does not satisfy (\ref{P2c2}) or (\ref{P2c3})}{$\tilde{\mv{u}}_u={\mv{0}}$.}}
{\For{$l=1$ \KwTo $L_u$}{
Generate $\tilde{\mv{v}}^{(l)}\sim \mathcal{CN}(\mv{0},{\mv{W}}_u^\star)$.\\
Obtain $\tilde{\mv{u}}^{(l)}=\frac{\tilde{\mv{v}}^{(l)}}{\|\tilde{\mv{v}}^{(l)}\|_2}$.\\
\If{$\tilde{\mv{u}}^{(l)}$ does not satisfy (\ref{P2c2}) or (\ref{P2c3})}{$\tilde{\mv{u}}^{(l)}={\mv{0}}$.}
}

Set $l^\star=\underset{l=1,...,L_u}{\arg\max}\ \|\tilde{\mv{u}}^{(l)H}\tilde{\mv{H}}\|_1$,
$\tilde{\mv{u}}_u=\tilde{\mv{u}}^{(l^\star)}$.}
\end{algorithm}

\newpage
\begin{algorithm}[t]
\caption{$\mathrm{Rand}_p$}\label{randp}
\SetKwData{Index}{Index}
\KwIn{${\mv{W}}^\star$, $\tilde{\mv{H}}$, $\tau$, $L_p$}
\KwOut{$\tilde{\mv{u}}_p$}
Obtain ${\mv{W}}_p^\star\in \mathbb{C}^{M_t\times M_t}$ by $[{\mv{W}}_p^\star]_{i,j}=[{\mv{W}^\star}]_{M_r+i,M_r+j},\ i=1,...,M_t,\ j=1,...,M_t$.

\eIf{$\mathrm{rank}({\mv{W}}_p^\star)=1$}{Obtain $\tilde{\mv{{p}}}$ by ${\mv{W}}_p^\star=\tilde{\mv{{p}}}\tilde{\mv{{p}}}^H$.\\
\eIf{Problem (P3) is infeasible with given ${\mv{p}}=\tilde{\mv{{p}}}$}{$\tilde{\mv{u}}_p={\mv{0}}$.}{
Obtain $\tilde{\mv{u}}_p$ as the optimal solution to Problem (P3) with given ${\mv{p}}=\tilde{\mv{{p}}}$.}}{
\For{$l=1$ \KwTo $L_p$}{
Generate ${\mv{\xi}}^{(l)}\sim \mathcal{CN}(\mv{0},\mv{W}_p^\star)$.\\
Obtain $\tilde{\mv{{p}}}^{(l)}=\left[\tilde{p}_1^{(l)},...,\tilde{p}_{M_t}^{(l)}\right]^T$ by $\tilde{p}_i^{(l)}=e^{j\arg \left\{\xi^{(l)}_i\right\}},\ \forall i$.

\eIf{Problem (P3) is infeasible with given ${\mv{p}}=\tilde{\mv{{p}}}^{(l)}$}{$\tilde{\mv{u}}^{(l)}={\mv{0}}$.}{
Obtain $\tilde{\mv{u}}^{(l)}$ as the optimal solution to Problem (P3) with given ${\mv{p}}=\tilde{\mv{{p}}}^{(l)}$.}
}

Set $l^\star=\underset{l=1,...,L_p}{\arg\max}\ \|\tilde{\mv{{u}}}^{(l)H}\tilde{\mv{H}}\|_1$,
$\tilde{\mv{u}}_p=\tilde{\mv{u}}^{(l^\star)}$.}
\end{algorithm}

However, it is worth noting that due to the non-convex constraint given in (\ref{P2c2}) of Problem (P2), the feasibility of the approximate solution obtained by Algorithm \ref{randu} or Algorithm \ref{randp} cannot be guaranteed in general (i.e., $\tilde{\mv{u}}_u={\mv{0}}$ or $\tilde{\mv{u}}_p={\mv{0}}$ may occur). Therefore, we propose to employ both Algorithm \ref{randu} and Algorithm \ref{randp} to find $\tilde{\mv{u}}_u$ and $\tilde{\mv{u}}_p$, respectively, based on ${\mv{W}}^\star$; while we also solve Problem (P2-FS) to find a feasible solution denoted by $\tilde{\mv{u}}_f$. Then, an approximate solution to Problem (P2) is chosen from $\tilde{\mv{u}}_u$, $\tilde{\mv{u}}_p$ and $\tilde{\mv{u}}_f$ as the one that achieves the maximum objective value of Problem (P2). It is worth noting that since $\tilde{\mv{u}}_f$ is always a feasible solution to Problem (P2), the feasibility of the selected solution is guaranteed.

To summarize, we provide Algorithm \ref{algoP2}, which finds an approximate solution to Problem (P2) denoted by $\tilde{\mv{u}}$. Note that $\tilde{\mv{{u}}}$ is always feasible for Problem (P2), and is optimal if its corresponding $\mathrm{rank}(\mv{W}^\star)=1$.

\begin{algorithm}[!htb]
\caption{Algorithm for finding an approximate solution to Problem (P2)}\label{algoP2}
\SetKwData{Index}{Index}
\KwIn{$\tilde{\mv{H}}$, $\tau$, $L_u$, $L_p$}
\KwOut{$\tilde{\mv{{u}}}$}

Obtain ${\mv{W}}^\star$ by solving Problem (P3-SDR).

\eIf{$\mathrm{rank}(\mv{W}^\star)=1$}
{Obtain ${\mv{w}}^\star$ by $\mv{W}^\star={\mv{w}}^\star{\mv{w}}^{\star H}$.\\
Obtain $\tilde{\mv{{u}}}={\mv{{u}}}^\star$ by $u_j^\star=w_j^\star,\ j=1,...,M_r$.}
{Obtain $\tilde{\mv{u}}_f$ as the optimal solution to Problem (P2-FS). Obtain $\tilde{\mv{u}}_u$ and $\tilde{\mv{u}}_p$ via Algorithm \ref{randu} and Algorithm \ref{randp}, respectively.\\
Obtain $\tilde{\mv{u}}=\underset{\tilde{\mv{u}}_f,\tilde{\mv{u}}_u,\tilde{\mv{u}}_p}{\arg\max}\left\{\|\tilde{\mv{u}}_f^H\tilde{\mv{H}}\|_1,\|\tilde{\mv{u}}_u^H\tilde{\mv{H}}\|_1,\|\tilde{\mv{u}}_p^H\tilde{\mv{H}}\|_1\right\}$.}
\end{algorithm}
\section{Transceiver Optimization for Multi-Stream Transmission}
For multi-stream transmission, our objective is to minimize the maximum SER over the $K$ data streams, by jointly optimizing $\{{\mv{u}}_k\}_{k=1}^K$ and the transmit antenna grouping with given $\tilde{\mv{H}}$ and $\mathcal{S}$. First, we consider the optimization of $\{{\mv{u}}_k\}_{k=1}^K$ with given transmit antenna grouping, based on the MMSE or ZF criterion, respectively.
\subsection{MMSE-based Receive Beamforming}
\subsubsection{Problem Formulation}
First, note that the inter-group interference at each of the $k$th data stream given by ${\mv{u}}_k^H\tilde{\mv{H}}_{[-k]}{\mv{x}}_{[-k]}={\mv{u}}_k^H\sum_{j\neq k}\tilde{\mv{H}}_{j}{\mv{x}}_{j}$ is a random variable, whose distribution is difficult to obtain since each ${\mv{x}}_j$ is designed via a nonlinear mapping from $\alpha_js_j$ based on ${\mv{u}}_j^H\tilde{\mv{H}}_j$, as illustrated in Section II. For the purpose of analysis, in this subsection, we approximate the inter-group interference at the $k$th data stream by a Gaussian random variable with zero mean and variance $\mathbb{E}[\|{\mv{u}}_k^H\tilde{\mv{H}}_{[-k]}{\mv{x}}_{[-k]}\|_2^2]$.\footnote{Note that as the total number of interfering data streams $K-1$ grows, the accuracy of this approximation increases due to the central limit theorem.}

Then, similar to the case of single-stream transmission, we approximate the SER at the $k$th data stream, $P_s(k)$, with an upper bound, which is given by
\begin{align}
P_s(k)&\overset{(a)}{\leq}(N-1)Q\left(\frac{\sqrt{\frac{P}{M_t}}\|{\mv{u}}_k^H\tilde{\mv{H}}_k\|_1d_\mathrm{min}}{\sqrt{2\left(\mathbb{E}[\|{\mv{u}}_k^H\tilde{\mv{H}}_{[-k]}{\mv{x}}_{[-k]}\|_2^2]+\sigma^2\right)}}\right)\nonumber\\
&\overset{(b)}{\leq}(N-1)Q\left(\frac{\sqrt{\frac{P}{M_t}}\|{\mv{u}}_k^H\tilde{\mv{H}}_k\|_1d_\mathrm{min}}{\sqrt{2\left(\frac{P(K-1)}{K}\|{\mv{u}}_k^H\tilde{\mv{H}}_{[-k]}\|_2^2+\sigma^2\right)}}\right),\label{SERbound}
\end{align}
where $(a)$ results from the SER union bound (recall that we assume $\underset{s\in \mathcal{S}}{\max} |s|=1$, thus we should set $\alpha_k=\sqrt{\frac{P}{M_t}}\|{\mv{u}}_k^H\tilde{\mv{H}}_k\|_1$ in (\ref{SMchannel_new2})); $(b)$ can be derived by noting that $\mathbb{E}[\|{\mv{u}}_k^H\tilde{\mv{H}}_{[-k]}{\mv{x}}_{[-k]}\|_2^2]\leq \mathbb{E}[\|{\mv{u}}_k^H\tilde{\mv{H}}_{[-k]}\|_2^2\|{\mv{x}}_{[-k]}\|_2^2]=\frac{P(K-1)}{K}\|{\mv{u}}_k^H\tilde{\mv{H}}_{[-k]}\|_2^2$ holds due to the Cauchy-Schwarz inequality. Notice from (\ref{SERbound}) that minimizing the maximum SER over the $K$ data streams is equivalent to independently minimizing the SER of each data stream, by solving the following optimization problem for every $k\in \{1,...,K\}$:
\begin{align}
(\mbox{P4})\quad \underset{{\mv{u}}_k}{\max}\quad &\frac{\|{\mv{u}}_k^H\tilde{\mv{H}}_k\|_1}{\sqrt{\|{\mv{u}}_k^H\tilde{\mv{H}}_{[-k]}\|_2^2+\tilde{\sigma}^2}}\\
\mathrm{s.t.}\quad & \|{\mv{u}}_k\|_2\leq1\label{P4c1}\\
& \|{\mv{u}}_k^H\tilde{\mv{H}}_k\|_\infty\leq \frac{\tau+1}{2}\|{\mv{u}}_k^H\tilde{\mv{H}}_k\|_1\label{P4c2}\\
& \|{\mv{u}}_k^H\tilde{\mv{H}}_k\|_1>0,\label{P4c3}
\end{align}
where $\tilde{\sigma}=\sigma\sqrt{\frac{K}{P(K-1)}}$. Note that minimizing the SER upper bound in (\ref{SERbound}) can be shown to be equivalent to minimizing an upper bound of the mean squared error (MSE) between the symbol estimate $\hat{s}_k={\frac{y_k}{\alpha_k}}$ and $s_k$, thus we term this scheme as MMSE-based receive beamforming.
\subsubsection{Feasibility and Proposed Solution of Problem (P4)}
Next, note that Problem (P4) is feasible if and only if there exists ${\mv{u}}_k$ such that the constraints in (\ref{P4c1}), (\ref{P4c2}) and (\ref{P4c3}) are satisfied, which is similar to the feasibility condition of Problem (P2) in the case of single-stream transmission. Therefore, similar to the proof of Proposition \ref{prop_BFfeas}, Problem (P4) can be shown to be feasible if $\mathrm{rank}(\tilde{\mv{H}}_k)\geq 2$, i.e., $\frac{M_t}{K}\geq 2$ under the assumed i.i.d. Rayleigh fading channel.\footnote{It then follows that the number of data streams should satisfy $K\leq \min\left\{M_r,\frac{M_t}{2}\right\}$, which is expected since the degrees of freedom at the transmitter are reduced by half due to the stringent per-antenna CE constraint that fixes the amplitudes of the complex baseband signals.}

Then, by applying the Charnes-Cooper transformation \cite{Charnes} to Problem (P4), we have the following proposition.
\begin{proposition}\label{prop_P5eqv}
Problem (P4) is equivalent to the following problem:
\begin{align}
(\mbox{P5})\quad\underset{{\mv{u}}_k,t_k}{\max}\quad &\|{\mv{u}}_k^H\tilde{\mv{H}}_k\|_1\\
\mathrm{s.t.}\quad & \|{\mv{u}}_k^H\tilde{\mv{H}}_{[-k]}\|_2^2+t_k\tilde{\sigma}^2=1\\
& \|{\mv{u}}_k\|_2\leq \sqrt{t_k}\\
      & \|{\mv{u}}_k^H\tilde{\mv{H}}_k\|_\infty\leq \frac{\tau+1}{2}\|{\mv{u}}_k^H\tilde{\mv{H}}_k\|_1\\
      & \|{\mv{u}}_k^H\tilde{\mv{H}}_k\|_1>0\\
      & t_k>0.
\end{align}
\end{proposition}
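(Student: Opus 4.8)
The plan is to establish the equivalence between Problem (P4) and Problem (P5) by exhibiting a bijective correspondence between their feasible solutions that preserves the objective value, which is the standard mechanism behind the Charnes--Cooper transformation \cite{Charnes} for linear-fractional (and here fractional-quadratic) programs. Concretely, I would proceed in two directions. First, given any feasible $\mv{u}_k$ for Problem (P4), I would construct a feasible point for Problem (P5) by setting the scaling $t_k=1/\left(\|\mv{u}_k^H\tilde{\mv{H}}_{[-k]}\|_2^2+\tilde\sigma^2\right)$ and then rescaling $\mv{u}_k \mapsto \sqrt{t_k}\,\mv{u}_k$. One then checks that the equality constraint $\|\mv{u}_k^H\tilde{\mv{H}}_{[-k]}\|_2^2+t_k\tilde\sigma^2=1$ holds by construction, that $\|\mv{u}_k\|_2 \le 1$ in (P4) becomes $\|\mv{u}_k\|_2 \le \sqrt{t_k}$ after rescaling, that the homogeneous (degree-one in $\mv{u}_k$) constraints (\ref{P4c2}) and (\ref{P4c3}) are invariant under the positive scaling, that $t_k>0$ since the denominator in (P4) is strictly positive (as $\tilde\sigma^2>0$), and that the objective of (P5), namely $\|\mv{u}_k^H\tilde{\mv{H}}_k\|_1$ evaluated at the rescaled vector, equals $\sqrt{t_k}\,\|\mv{u}_k^H\tilde{\mv{H}}_k\|_1 = \|\mv{u}_k^H\tilde{\mv{H}}_k\|_1/\sqrt{\|\mv{u}_k^H\tilde{\mv{H}}_{[-k]}\|_2^2+\tilde\sigma^2}$, which is exactly the objective of (P4) at the original $\mv{u}_k$.

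Second, for the reverse direction, given any feasible $(\mv{u}_k,t_k)$ for Problem (P5), I would recover a feasible point of (P4) by the inverse map $\mv{u}_k \mapsto \mv{u}_k/\sqrt{t_k}$ (well-defined since $t_k>0$). Here the key observations are: the normalization equality of (P5) guarantees $\|\mv{u}_k^H\tilde{\mv{H}}_{[-k]}\|_2^2+t_k\tilde\sigma^2=1$, so that after dividing by $t_k$ one gets $\|(\mv{u}_k/\sqrt{t_k})^H\tilde{\mv{H}}_{[-k]}\|_2^2+\tilde\sigma^2 = 1/t_k$; the constraint $\|\mv{u}_k\|_2\le\sqrt{t_k}$ becomes $\|\mv{u}_k/\sqrt{t_k}\|_2\le 1$, matching (\ref{P4c1}); the homogeneous constraints again carry over; and the objective of (P4) at $\mv{u}_k/\sqrt{t_k}$ is $\|\mv{u}_k^H\tilde{\mv{H}}_k\|_1/\sqrt{t_k}$ divided by $\sqrt{1/t_k}$, which simplifies to $\|\mv{u}_k^H\tilde{\mv{H}}_k\|_1$, the objective of (P5). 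Since the two maps are mutually inverse and objective-preserving, the optimal values coincide and optimal solutions are in one-to-one correspondence (modulo the scaling), establishing the claimed equivalence.

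The one subtlety I would be careful about — and what I expect to be the only real obstacle — is the handling of the \emph{equality} normalization constraint $\|\mv{u}_k^H\tilde{\mv{H}}_{[-k]}\|_2^2+t_k\tilde\sigma^2=1$ versus an inequality. In the classical Charnes--Cooper argument one typically normalizes the denominator to be exactly $1$; one must verify that this is without loss of optimality, i.e.\ that at an optimal solution of (P5) the auxiliary variable $t_k$ can always be chosen so the normalization holds with equality, or equivalently that relaxing it to ``$\le 1$'' would not enlarge the optimal value. This follows because for fixed direction of $\mv{u}_k$ the objective $\|\mv{u}_k^H\tilde{\mv{H}}_k\|_1$ is increasing in the scale of $\mv{u}_k$, and the constraint $\|\mv{u}_k\|_2\le\sqrt{t_k}$ together with $t_k>0$ lets one scale up until the tightest active constraint is hit; a short argument shows the normalization can be taken as an equality. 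Aside from this bookkeeping point, the proof is a routine substitution, and I would simply present the two maps, verify constraint-by-constraint, and note the objective identity in each direction.
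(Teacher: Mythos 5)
Your proposal is correct and follows essentially the same route as the paper: the standard Charnes--Cooper argument via two mutually inverse, objective-preserving feasibility maps, namely $\mv{u}_k\mapsto\left(\mv{u}_k/\sqrt{c},\,1/c\right)$ with $c=\|\mv{u}_k^H\tilde{\mv{H}}_{[-k]}\|_2^2+\tilde\sigma^2$ in one direction and $(\mv{u}_k,t_k)\mapsto \mv{u}_k/\sqrt{t_k}$ in the other, with constraint-by-constraint verification. Your closing worry about the equality normalization is a non-issue, since in your forward map the equality holds by construction and in the reverse direction it is given, so the two maps alone already pin down equal optimal values.
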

\begin{proof}
Please refer to Appendix \ref{proof_prop_P5eqv}.
\end{proof}

Furthermore, by introducing an auxiliary vector ${\mv{p}}_k\in \mathbb{C}^{\frac{M_t}{K}\times 1}$ with each element satisfying $|p_{ki}|=1,\ i=1,...,\frac{M_t}{K}$, Problem (P5) can be shown to be equivalent to the following problem:
\begin{align}
(\mbox{P6})\quad\underset{{\mv{u}}_k,{\mv{p}}_k}{\max}\quad &\mathfrak{Re}\left\{{\mv{u}}_k^H\tilde{\mv{H}}_k{\mv{p}}_k\right\}\\
\mathrm{s.t.}\quad & \tilde{\sigma}^2\|{\mv{u}}_k\|_2^2+\|{\mv{u}}_k^H\tilde{\mv{H}}_{[-k]}\|_2^2\leq1\\
      & \|{\mv{u}}_k^H\tilde{\mv{H}}_k\|_\infty\leq \frac{\tau+1}{2}\mathfrak{Re}\left\{{\mv{u}}_k^H\tilde{\mv{H}}_k{\mv{p}}_k\right\}\\
      & |p_{ki}|=1,\quad i=1,...,\frac{M_t}{K}.
\end{align}
The proof is similar to that of Proposition \ref{prop_P2equi}, thus is omitted here for brevity. It is worth noting that given any feasible ${\mv{u}}_k$ to Problem (P6), the optimal value of Problem (P6) can be shown to be the same as that of Problem (P4) with the solution $\tilde{\sigma}{\mv{u}}_k\big/\sqrt{1-\|{\mv{u}}_k^{H}\tilde{\mv{H}}_{[-k]}\|_2^2}$.

Note that Problem (P6) is a non-convex QCQP. In the following, we apply the SDR technique for finding an approximate solution to Problem (P6) as well as Problem (P4). Specifically, we define ${\mv{b}}_k=[{\mv{u}}_k^T\ {\mv{p}}_k^T]^T$, ${\mv{W}}_k={\mv{b}}_k{\mv{b}}_k^H$ and formulate the following problem:
\begin{align}
(\mbox{P6-SDR})\quad\underset{{\mv{W}}_k}{\max}\quad &\mathfrak{Re}\{\mathrm{tr}\left({\mv{AW}}_k{\mv{G}}_k\right)\}\\
\mathrm{s.t.}\quad & {\tilde{\sigma}^2}\mathrm{tr}({\mv{W}}_k)+\mathrm{tr}\left({\mv{W}}_k{\mv{G}}_{[-k]}{\mv{G}}_{[-k]}^H\right)\leq 1+\frac{M_t}{K}{\tilde{\sigma}^2}\\
      & \left|{\mv{f}}_i^T{\mv{AW}}_k{\mv{g}}_{ki}\right|\leq \frac{\tau+1}{2}\mathfrak{Re}\left\{\mathrm{tr}({\mv{AW}}_k{\mv{G}}_k)\right\},\quad i=1,...,\frac{M_t}{K}\\
      & [{\mv{W}}_k]_{i,i}=1,\quad i=M_r+1,...,M_r+\frac{M_t}{K}\\
      & {\mv{W}}_k\succeq {\mv{0}},
\end{align}
where ${\mv{A}}=\left[{\mv{0}}_{\frac{M_t}{K}\times M_r}\ {\mv{I}}_{\frac{M_t}{K}}\right]$; ${\mv{G}}_k=\left[\tilde{\mv{H}}_k^T\ {\mv{0}}_{\frac{M_t}{K}\times \frac{M_t}{K}}^T\right]^T$ with the $i$th column vector denoted by ${\mv{g}}_{ki}$; ${\mv{G}}_{[-k]}=\left[\tilde{\mv{H}}_{[-k]}^T\ {\mv{0}}_{\frac{M_t}{K}\times \frac{M_t(K-1)}{K}}^T\right]^T$; ${\mv{f}}_i$ denotes the $i$th column vector of ${\mv{I}}_{\frac{M_t}{K}}$. It can be shown that Problem (P6) is equivalent to Problem (P6-SDR) with the additional constraint of $\mathrm{rank}({\mv{W}}_k)=1$, thus the optimal value of Problem (P6-SDR) is in general an upper bound on those of Problems (P6) and (P4).

Problem (P6-SDR) is an SDP which can be efficiently solved via existing software, e.g., CVX \cite{cvx}. Let ${\mv{W}}_k^\star$ and $({\mv{u}}_k^\star,{\mv{p}}_k^\star)$ denote the optimal solutions to Problem (P6-SDR) and Problem (P6), respectively. If $\mathrm{rank}({\mv{W}}_k^\star)=1$, our relaxation is tight, and ${\mv{b}}_k^\star=[{\mv{u}}_k^{\star T}\ {\mv{p}}_k^{\star T}]^T$ can be obtained from the EVD of ${\mv{W}}_k^\star$. The optimal solution to Problem (P4) is thus obtained as $\tilde{\sigma}{\mv{u}}_k^\star\big/\sqrt{1-\|{\mv{u}}_k^{\star H}\tilde{\mv{H}}_{[-k]}\|_2^2}$. Otherwise, for the case of $\mathrm{rank}({\mv{W}}_k^\star)>1$, by noticing the similarity between Problem (P6) and Problem (P3), as well as that between Problem (P6-SDR) and Problem (P3-SDR), an approximate solution to Problem (P4) can be extracted from ${\mv{W}}_k^\star$ by applying the Gaussian randomization methods proposed in Section III (i.e., $\mathrm{Rand}_u$ and $\mathrm{Rand}_p$) with minor modification. It is also worth noting that a feasible solution to Problem (P4) can be always obtained by solving a similar problem as Problem (P2-FS). The overall algorithm for finding an approximate solution to Problem (P4) is similar to Algorithm 3 for Problem (P2) in the case of single-stream transmission, which is thus omitted here for brevity.
\subsection{ZF-based Receive Beamforming}
\subsubsection{Problem Formulation}
In this subsection, we consider the ZF-based receive beamforming, where the inter-group interference at each data stream is eliminated by designing $\{{\mv{u}}_k\}_{k=1}^K$ subject to the following constraints:
\begin{align}
{\mv{u}}_k^H\tilde{\mv{H}}_{[-k]}={\mv{0}},\quad\forall k.\label{ZFconstraint}
\end{align}
Note that the equalities in (\ref{ZFconstraint}) have non-trivial solutions (i.e., ${\mv{u}}_k\neq {\mv{0}},\ \forall k$) if and only if $\mathrm{rank}\left(\tilde{\mv{H}}_{[-k]}\right)=\min\left\{M_r,\frac{M_t(K-1)}{K}\right\}< M_r,\ \forall k$ holds. This implies $M_r\geq \frac{M_t(K-1)}{K}+1$ needs to be true, which is thus assumed in this subsection.\footnote{It is also worth noting that our results can be extended to the case of $M_r< \frac{M_t(K-1)}{K}+1$, by switching off an appropriate number of transmit antennas.}

The structures of ${\mv{u}}_k$'s that satisfy (\ref{ZFconstraint}) can be simplified as follows. Let the singular value decomposition (SVD) of $\tilde{\mv{H}}_{[-k]}^H$ be denoted as
\begin{align}
\tilde{\mv{H}}_{[-k]}^H={\mv{U}}_k{\mv{\Lambda}}_k {\mv{V}}_k^H={\mv{U}}_k{\mv{\Lambda}}_k[\bar{\mv{V}}_k\ \tilde{\mv{V}}_k]^H,
\end{align}
where ${\mv{U}}_k\in \mathbb{C}^{\frac{(K-1)M_t}{K}\times\frac{(K-1)M_t}{K}}$ and ${\mv{V}}_k\in \mathbb{C}^{M_r\times M_r}$ are unitary matrices, i.e., ${\mv{U}}_k{\mv{U}}_k^H={\mv{U}}_k^H{\mv{U}}_k={\mv{I}}_{\frac{(K-1)M_t}{K}}, {\mv{V}}_k{\mv{V}}_k^H={\mv{V}}_k^H{\mv{V}}_k={\mv{I}}_{M_r}$, and ${\mv{\Lambda}}_k=\left[{\mv{\Sigma}}_k\ {\mv{0}}\right]\in \mathbb{C}^{\frac{(K-1)M_t}{K}\times M_r}$ with ${\mv{\Sigma}}_k\in \mathbb{C}^{\frac{(K-1)M_t}{K}\times \frac{(K-1)M_t}{K}}$ being a diagonal matrix. Furthermore, $\bar{\mv{V}}_k\in \mathbb{C}^{M_r\times \frac{(K-1)M_t}{K}}$ and $\tilde{\mv{V}}_k\in \mathbb{C}^{M_r\times \left(M_r-\frac{(K-1)M_t}{K}\right)}$ consist of the first $\frac{(K-1)M_t}{K}$ and the last $M_r-\frac{(K-1)M_t}{K}$ right singular vectors of $\tilde{\mv{H}}_{[-k]}^H$, respectively. It can be shown that $\tilde{\mv{V}}_k$ with $\tilde{\mv{V}}_k^H\tilde{\mv{V}}_k={\mv{I}}_{M_r-\frac{(K-1)M_t}{K}}$ forms an orthogonal basis for the null space of $\tilde{\mv{H}}_{[-k]}^H$. Therefore, to guarantee ${\mv{u}}_k^H\tilde{\mv{H}}_{[-k]}={\mv{0}}$, ${\mv{u}}_k$ must be in the following form:
\begin{align}
{\mv{u}}_k=\tilde{\mv{V}}_k{\mv{w}}_k,\label{uk}
\end{align}
where ${\mv{w}}_k\in \mathbb{C}^{\left(M_r-\frac{(K-1)M_t}{K}\right)\times1}$. By defining ${\mv{H}}_k=\tilde{\mv{V}}_k^H{\tilde{\mv{H}}_k}$, we have
\begin{align}
y_k={\mv{u}}_k^H\tilde{\mv{H}}_k{\mv{x}}_k+n_k={\mv{w}}_k^H{\mv{H}}_k{\mv{x}}_k+n_k.\label{ZFchannel}
\end{align}

As a result of (\ref{ZFchannel}), the $K$ data streams are transmitted over $K$ parallel smaller-size MIMO sub-channels ${\mv{H}}_k$'s, each with receive beamforming vector ${\mv{w}}_k$ and transmitted signal vector ${\mv{x}}_k$. Similar to the case of single-stream transmission, we aim to minimize the maximum union bound of SER over the $K$ data streams, which can be shown to be equivalent to solving the following problem for every $k\in \{1,...,K\}$:
\begin{align}
(\mbox{P7})\quad\underset{{\mv{w}}_k}{\max}\quad &\|{\mv{w}}_k^H{\mv{H}}_k\|_1\\
\mathrm{s.t.}\quad & \|{\mv{w}}_k\|_2\leq1\\
      & \|{\mv{w}}_k^H{\mv{H}}_k\|_\infty\leq \frac{\tau+1}{2}\|{\mv{w}}_k^H{\mv{H}}_k\|_1\label{P4.1c2}\\
      & \|{\mv{w}}_k^H{\mv{H}}_k\|_1>0.
\end{align}
\subsubsection{Feasibility and Proposed Solution of Problem (P7)}
Next, by generalizing the result in Proposition \ref{prop_BFfeas}, we provide a sufficient condition under which the feasibility of Problem (P7) is guaranteed for every $k$, as shown in the following proposition.
\begin{proposition}\label{prop_SMfeas}
Problem (P7) is feasible for all $k\in \{1,...,K\}$ with any given transmit antenna grouping, if $M_r\geq \frac{(K-1)M_t}{K}+2$ and $\frac{M_t}{K}\geq 2$.
\end{proposition}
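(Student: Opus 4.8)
The plan is to reduce Problem (P7) for each $k$ to an instance of the feasibility question already settled in Proposition \ref{prop_BFfeas}, applied to the effective sub-channel matrix ${\mv{H}}_k=\tilde{\mv{V}}_k^H\tilde{\mv{H}}_k\in\mathbb{C}^{(M_r-\frac{(K-1)M_t}{K})\times\frac{M_t}{K}}$. Recall that Proposition \ref{prop_BFfeas} guarantees feasibility of a problem of the form (P2) — equivalently, of the constraints in (\ref{P4.1c2}) together with $\|{\mv{w}}_k^H{\mv{H}}_k\|_1>0$ — whenever the relevant channel matrix has rank at least two. So it suffices to show $\mathrm{rank}({\mv{H}}_k)\geq 2$ for every $k$ under the stated hypotheses.

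First I would establish that $\tilde{\mv{V}}_k$ has $M_r-\frac{(K-1)M_t}{K}$ columns and that, since $M_r\geq\frac{(K-1)M_t}{K}+2$ by assumption, this number is at least $2$; hence ${\mv{H}}_k$ has at least two rows, and also at least two columns since $\frac{M_t}{K}\geq 2$. The key step is then to argue that $\mathrm{rank}({\mv{H}}_k)=\mathrm{rank}(\tilde{\mv{V}}_k^H\tilde{\mv{H}}_k)\geq 2$. Here I would use the fact that $\tilde{\mv{V}}_k$ is an orthonormal basis for $\mathrm{Null}(\tilde{\mv{H}}_{[-k]}^H)$, so $\tilde{\mv{V}}_k^H\tilde{\mv{H}}_k$ is (up to the isometry $\tilde{\mv{V}}_k^H$ restricted to that null space) the projection of the columns of $\tilde{\mv{H}}_k$ onto $\mathrm{Null}(\tilde{\mv{H}}_{[-k]}^H)$. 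Its rank equals $\dim\big(\mathrm{span}(\tilde{\mv{h}}_{\frac{(k-1)M_t}{K}+1},\ldots,\tilde{\mv{h}}_{\frac{kM_t}{K}})\cap\mathrm{Null}(\tilde{\mv{H}}_{[-k]}^H)\big)$ is not quite right; more precisely, $\mathrm{rank}(\tilde{\mv{V}}_k^H\tilde{\mv{H}}_k)=\mathrm{rank}(\tilde{\mv{H}}_k)-\dim\big(\mathrm{colspan}(\tilde{\mv{H}}_k)\cap\mathrm{colspan}(\tilde{\mv{H}}_{[-k]})\big)$, using that $\mathrm{Null}(\tilde{\mv{H}}_{[-k]}^H)=\mathrm{colspan}(\tilde{\mv{H}}_{[-k]})^\perp$. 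Under the i.i.d.\ Rayleigh assumption, $\tilde{\mv{H}}$ is full rank with probability one, so $\mathrm{rank}(\tilde{\mv{H}}_k)=\min\{M_r,\frac{M_t}{K}\}$ and $\mathrm{rank}(\tilde{\mv{H}}_{[-k]})=\min\{M_r,\frac{(K-1)M_t}{K}\}$, and by general position the intersection of the two column spaces has the smallest dimension consistent with these ranks and the ambient dimension $M_r$; a short counting argument then shows $\mathrm{rank}({\mv{H}}_k)=\min\{M_r-\frac{(K-1)M_t}{K},\frac{M_t}{K}\}$, which is $\geq 2$ exactly when $M_r\geq\frac{(K-1)M_t}{K}+2$ and $\frac{M_t}{K}\geq 2$.

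I expect the main obstacle to be making the generic-rank argument rigorous without hand-waving about ``general position'': one must verify that the dimension of $\mathrm{colspan}(\tilde{\mv{H}}_k)\cap\mathrm{colspan}(\tilde{\mv{H}}_{[-k]})$ is almost surely as small as possible. This can be done by the standard fact that a polynomial identity in the entries of $\tilde{\mv{H}}$ either holds identically or on a measure-zero set; exhibiting one particular channel realization for which ${\mv{H}}_k$ attains rank $\min\{M_r-\frac{(K-1)M_t}{K},\frac{M_t}{K}\}$ (for instance, built from disjoint blocks of an identity matrix so that the two column spaces are complementary) shows the rank-deficient locus is not everything, hence has measure zero. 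Once $\mathrm{rank}({\mv{H}}_k)\geq 2$ is in hand for all $k$ simultaneously (a finite union of measure-zero bad events), invoking Proposition \ref{prop_BFfeas} with ${\mv{H}}_k$ in place of $\tilde{\mv{H}}$ — noting that Problem (P7) has exactly the structure of Problem (P2) up to the harmless replacement of the unit-norm constraint by $\|{\mv{w}}_k\|_2\leq 1$, as already observed in the passage relating (P1) and (P2) — completes the proof.
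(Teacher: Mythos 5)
Your proposal is correct in its overall skeleton and reaches the right conclusion, but the key rank step is argued along a genuinely different route from the paper. Like the paper, you reduce the claim to showing $\mathrm{rank}({\mv{H}}_k)\geq 2$ for every $k$ and then invoke Proposition \ref{prop_BFfeas} (noting that (P7) has the same form as (P2)). The paper, however, gets the rank bound deterministically: it uses $\mathrm{rank}(\tilde{\mv{V}}_k)=M_r-\frac{(K-1)M_t}{K}$, the fact that $\tilde{\mv{V}}_k^H\tilde{\mv{H}}_{[-k]}={\mv{0}}$ implies $\mathrm{rank}(\tilde{\mv{V}}_k^H\tilde{\mv{H}}_k)=\mathrm{rank}(\tilde{\mv{V}}_k^H\tilde{\mv{H}})$, and Sylvester's rank inequality $\mathrm{rank}(\tilde{\mv{V}}_k^H\tilde{\mv{H}})\geq \mathrm{rank}(\tilde{\mv{V}}_k^H)+\mathrm{rank}(\tilde{\mv{H}})-M_r=\min\{M_r,M_t\}-\frac{(K-1)M_t}{K}\geq 2$, with no genericity argument beyond the full-rank facts already assumed in Section IV. You instead compute the rank as a projection onto $\mathrm{colspan}(\tilde{\mv{H}}_{[-k]})^{\perp}$ and then invoke a "general position" claim about $\dim\bigl(\mathrm{colspan}(\tilde{\mv{H}}_k)\cap\mathrm{colspan}(\tilde{\mv{H}}_{[-k]})\bigr)$, backed by a measure-zero polynomial argument. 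Two remarks: first, the obstacle you flag is not actually an obstacle, because the dimension formula gives $\dim(S\cap W)=\mathrm{rank}(\tilde{\mv{H}}_k)+\mathrm{rank}(\tilde{\mv{H}}_{[-k]})-\mathrm{rank}(\tilde{\mv{H}})$ exactly, so your "exact generic rank" $\min\bigl\{M_r-\frac{(K-1)M_t}{K},\frac{M_t}{K}\bigr\}$ follows deterministically from the almost-sure full rank of $\tilde{\mv{H}}$ (and of $\tilde{\mv{H}}_{[-k]}$, which is needed anyway for the ZF scheme to be defined) — no general-position or Zariski-type argument is required, and it coincides with the paper's Sylvester lower bound. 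Second, if you do insist on the measure-zero route, note that ${\mv{H}}_k=\tilde{\mv{V}}_k^H\tilde{\mv{H}}_k$ is not a polynomial function of the entries of $\tilde{\mv{H}}$ (since $\tilde{\mv{V}}_k$ comes from an SVD), so you would first have to recast the rank condition in terms of minors of $\tilde{\mv{H}}$ itself; the dimension-formula identity does this for free. In short, your argument is completable and buys an exact rank characterization, but the paper's inequality-based proof is shorter and avoids any probabilistic genericity machinery.
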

\begin{proof}
Please refer to Appendix \ref{proof_prop_SMfeas}.
\end{proof}
\begin{remark}
It is worth noting that for the case of $2K\leq M_r< \frac{(K-1)M_t}{K}+2$ and $\frac{M_t}{K}\geq 2$, this scheme can still be made feasible by selecting a subset of $M_t', M_t'\in \left[2K,\frac{K(M_r-2)}{K-1}\right]$ transmit antennas for CE precoding with the other antennas not used.
\end{remark}

Notice that Problem (P7) is in the same form as Problem (P2). Therefore, the solution to Problem (P7) can be readily obtained by applying Algorithm 3.

Finally, by solving Problem (P4) or Problem (P7) for all $k\in \{1,...,K\}$ with $\{\tilde{\mv{H}}_k\}_{k=1}^K$ or $\left\{{\mv{H}}_k\right\}_{k=1}^K$ resulting from all possible transmit antenna groupings, the optimal grouping can be obtained as the one that yields the maximum minimum objective value of Problem (P4) or Problem (P7) over all $k$'s, respectively.
\section{Numerical Results}
In this section, we provide numerical results to corroborate our study. We assume $\mathcal{S}$ is an $N$-ary QAM constellation unless specified otherwise. The average signal-to-noise ratio (SNR) is defined as $\mathrm{SNR}=\frac{P\beta}{\sigma^2}$. The numbers of randomization trials for $\mathrm{Rand}_u$ and $\mathrm{Rand}_p$ are set as $L_u=50$ and $L_p=50$, respectively.
\subsection{Single-Stream Transmission}
In this subsection, we consider the case of single-stream transmission (i.e., $K=1$) and compare the performance of our proposed receive beamforming scheme with the following benchmark schemes.
\begin{itemize}
\item {\bf{Antenna Selection (AS)}}: In this scheme, the $j$th element in the receive beamforming vector is given by $u_j=1$ if $j=j^\star$, and $u_j=0$ otherwise, where $j^\star$ denotes the optimal solution to the following problem:
    \begin{align}
    \underset{j=1,...,M_r}{\max}\quad &\|\tilde{\mv{h}}'_j\|_1\label{PBS1}\\
    \mathrm{s.t.} \quad &\|\tilde{\mv{h}}'_j\|_\infty\leq \frac{\tau+1}{2}\|\tilde{\mv{h}}'_j\|_1,\nonumber
    \end{align}
    where $\tilde{\mv{h}}'_j$ denotes the transposed vector of the $j$th row of $\tilde{\mv{H}}$. Problem (\ref{PBS1}) can be easily solved via one-dimensional search over $j$. If Problem (\ref{PBS1}) is infeasible, we set $j^\star=\underset{j=1,...,M_r}{\arg\max}\ \|\tilde{\mv{h}}'_j\|_1$.
\item {\bf{Strongest Eigenmode Beamforming (SEB)}}: In this scheme, the receive beamforming vector is obtained as the optimal solution to the following problem:
    \begin{align}
    \underset{\|{\mv{u}}\|_2\leq1}{\max}\quad &\|{\mv{u}}^H{\tilde{\mv{H}}}\|_2,\label{PBS2}
    \end{align}
    which can be shown to be the eigenvector corresponding to the maximum eigenvalue of ${{\tilde{\mv{H}}\tilde{\mv{H}}}}^H$.
\end{itemize}

In Fig. \ref{BFSER}, we consider the case of $N=16$ and show the average SERs of our proposed scheme and the benchmark schemes under the following setups: i) $M_t=2, M_r=4$ and ii) $M_t=M_r=4$. Note that for AS or SEB with $16$-QAM constellation (i.e., $\underset{s\in \mathcal{S}}{\min}|s|=\tau=\frac{1}{3}$), high SER can occur if the resulting receive beamforming vector $\mv{u}$ does not satisfy the constraint in (\ref{P2c2}), thus is infeasible for CE precoding. Therefore, we also show in Fig. \ref{BFSER} the average SERs of AS and SEB with hybrid $16$-QAM/$16$-PSK (phase shift keying) constellations, where the constellation $\mathcal{S}$ at the combiner output is adaptively switched to $16$-PSK if AS or SEB is infeasible with $16$-QAM, to achieve the same transmission rate. Note that such schemes are always feasible, since $16$-PSK constellation yields $\underset{s\in \mathcal{S}}{\min}|s|=\tau=1$, thus the constraint in (\ref{P2c2}) is always satisfied.
\begin{figure}[t]
  \centering
  \subfigure[$M_t=2,M_r=4$]{
    \label{fig:subfig:a}
    \includegraphics[width=3.12in]{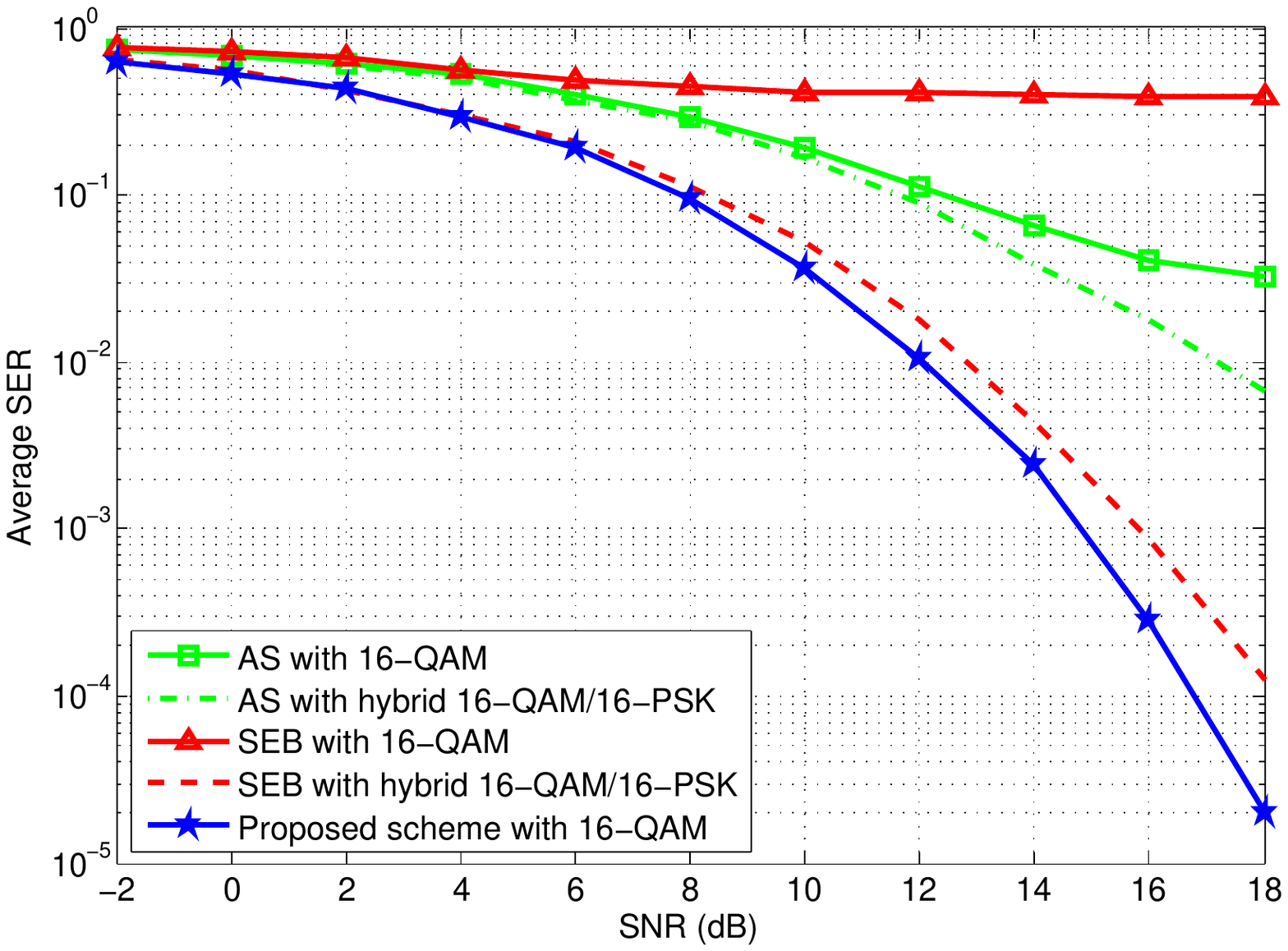}}
  \hspace{0in}
  \subfigure[$M_t=M_r=4$]{
    \label{fig:subfig:b}
    \includegraphics[width=3.12in]{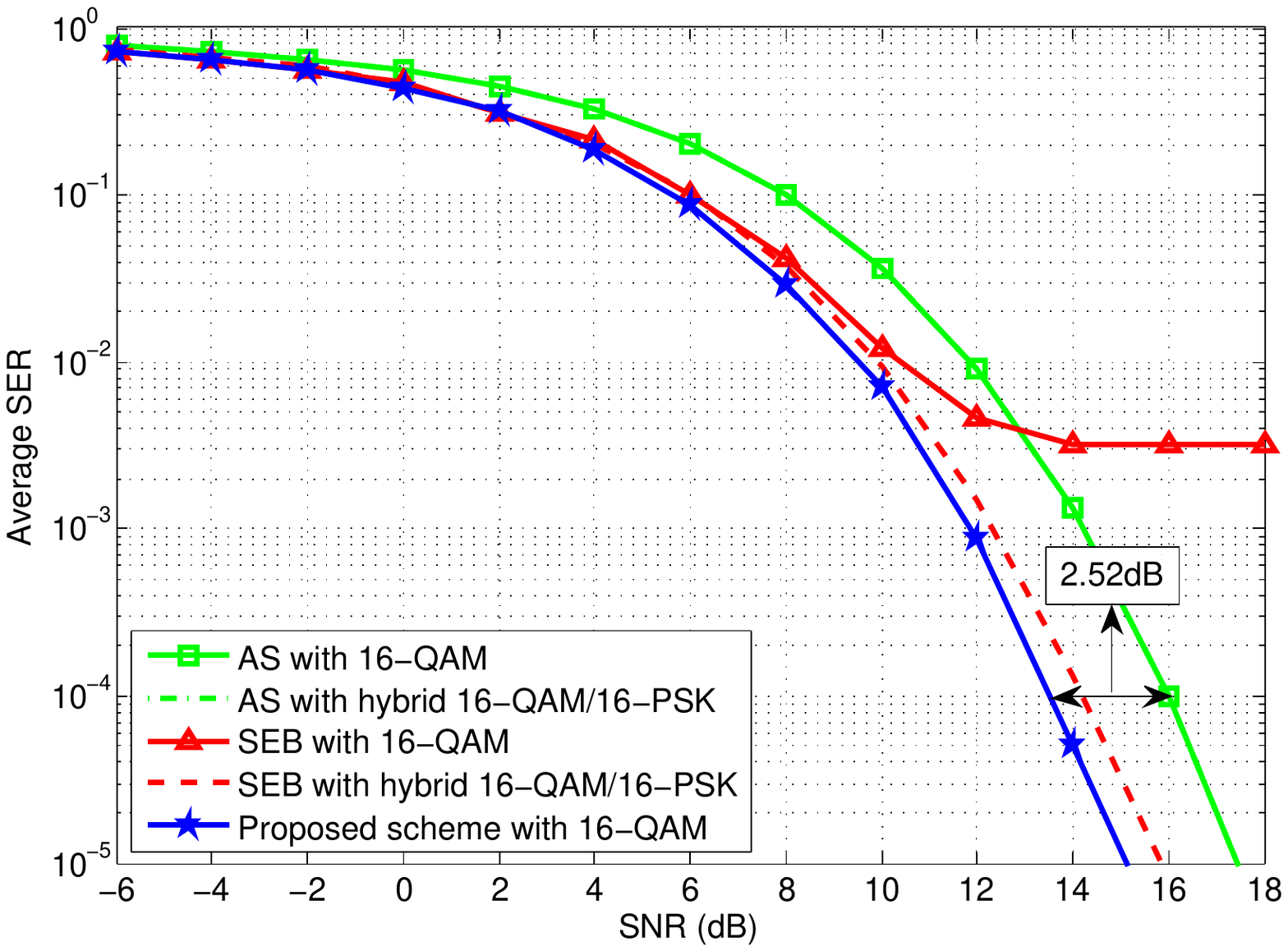}}
  \caption{Average SER comparison of receive beamforming schemes for CE single-stream transmission.}
  \label{BFSER}
\end{figure}

For both setups, it is observed from Fig. \ref{BFSER} that our proposed scheme outperforms both AS and SEB with $16$-QAM. Specifically, AS with $16$-QAM results in error floor for the case of $M_t=2, M_r=4$, and has an SNR loss of $2.52$dB compared with our proposed scheme at the average SER of $10^{-4}$ for the case of $M_t=M_r=4$. On the other hand, SEB with $16$-QAM results in error floor under both setups. Note that the performance gain of our proposed scheme is due to the optimization of $\mv{u}$, as well as the fact that AS and SEB with $16$-QAM may not be always feasible for CE precoding with any channel realization, while our proposed scheme is always feasible (as a consequence of Proposition \ref{prop_BFfeas}). Moreover, it is observed that our proposed scheme outperforms AS and SEB even with hybrid $16$-QAM/$16$-PSK under both setups. This implies that compared to using adaptive receiver constellation which requires additional implementation complexity, our proposed design of receive beamforming is a more cost-effective method for guaranteeing the feasibility of CE single-stream transmission and also achieves better average SER performance.
\subsection{Multi-Stream Transmission}
\begin{figure}[t]
  \centering
  \subfigure[$\bar{R}=2$bps/Hz, $M_t=M_r=4$]{
    \label{fig:subfig:a}
    \includegraphics[width=3.12in]{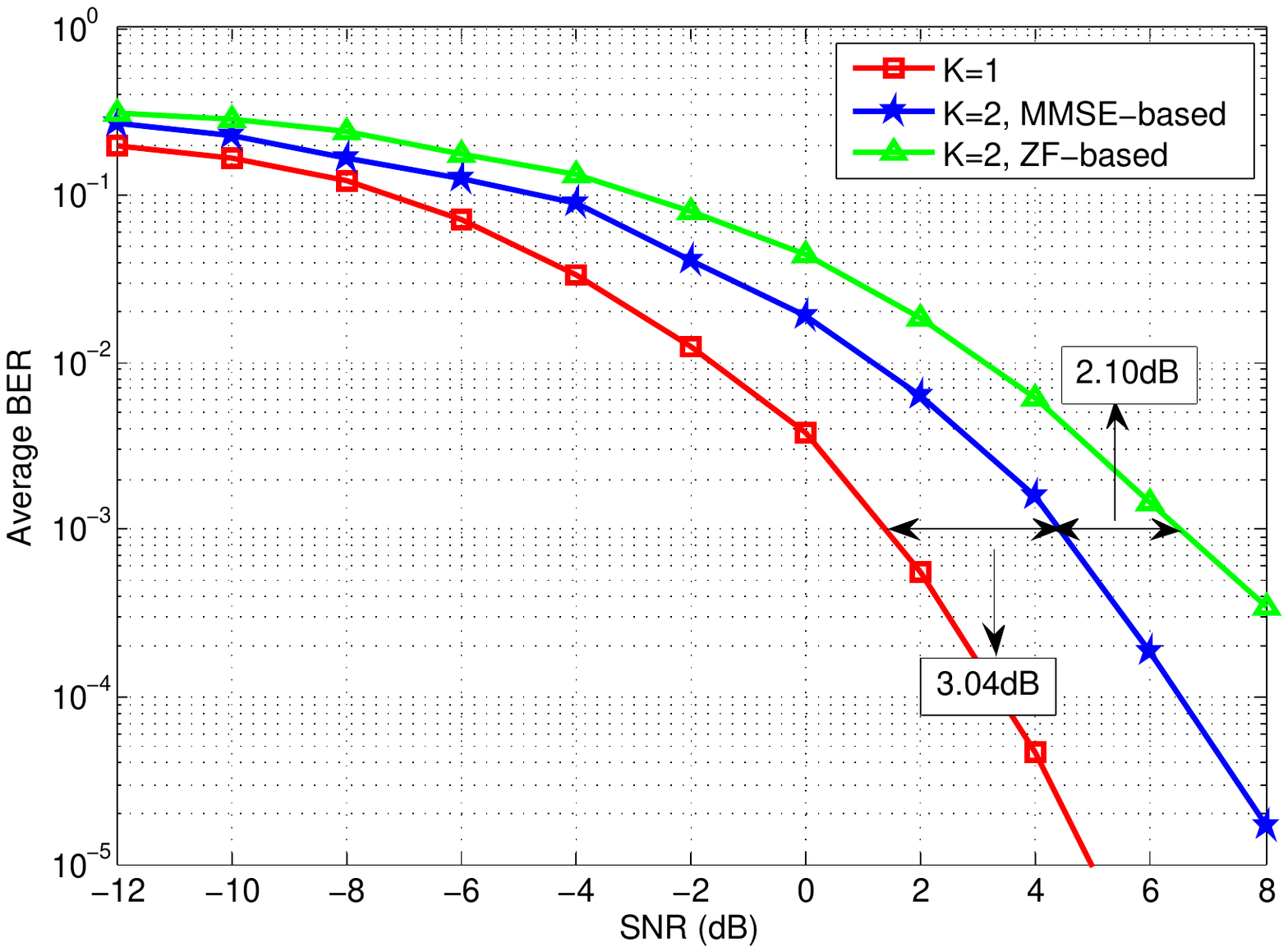}}
    \hspace{0in}
  \subfigure[$\bar{R}=8$bps/Hz, $M_t=M_r=4$]{
    \label{fig:subfig:b}
    \includegraphics[width=3.12in]{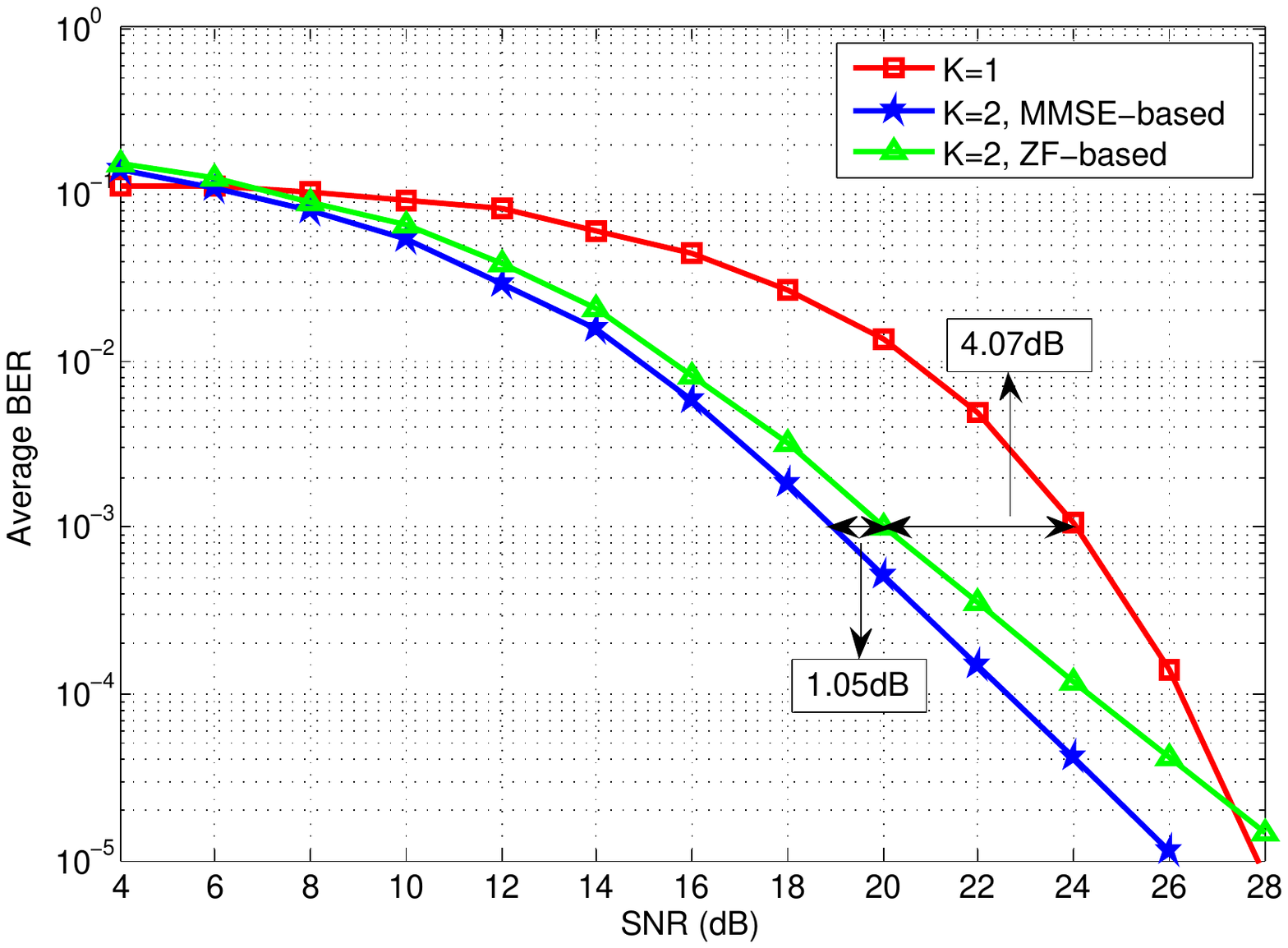}}
    \hspace{0in}
    \subfigure[$\bar{R}=8$bps/Hz, $M_t=M_r=8$]{
    \label{fig:subfig:b}
    \includegraphics[width=3.12in]{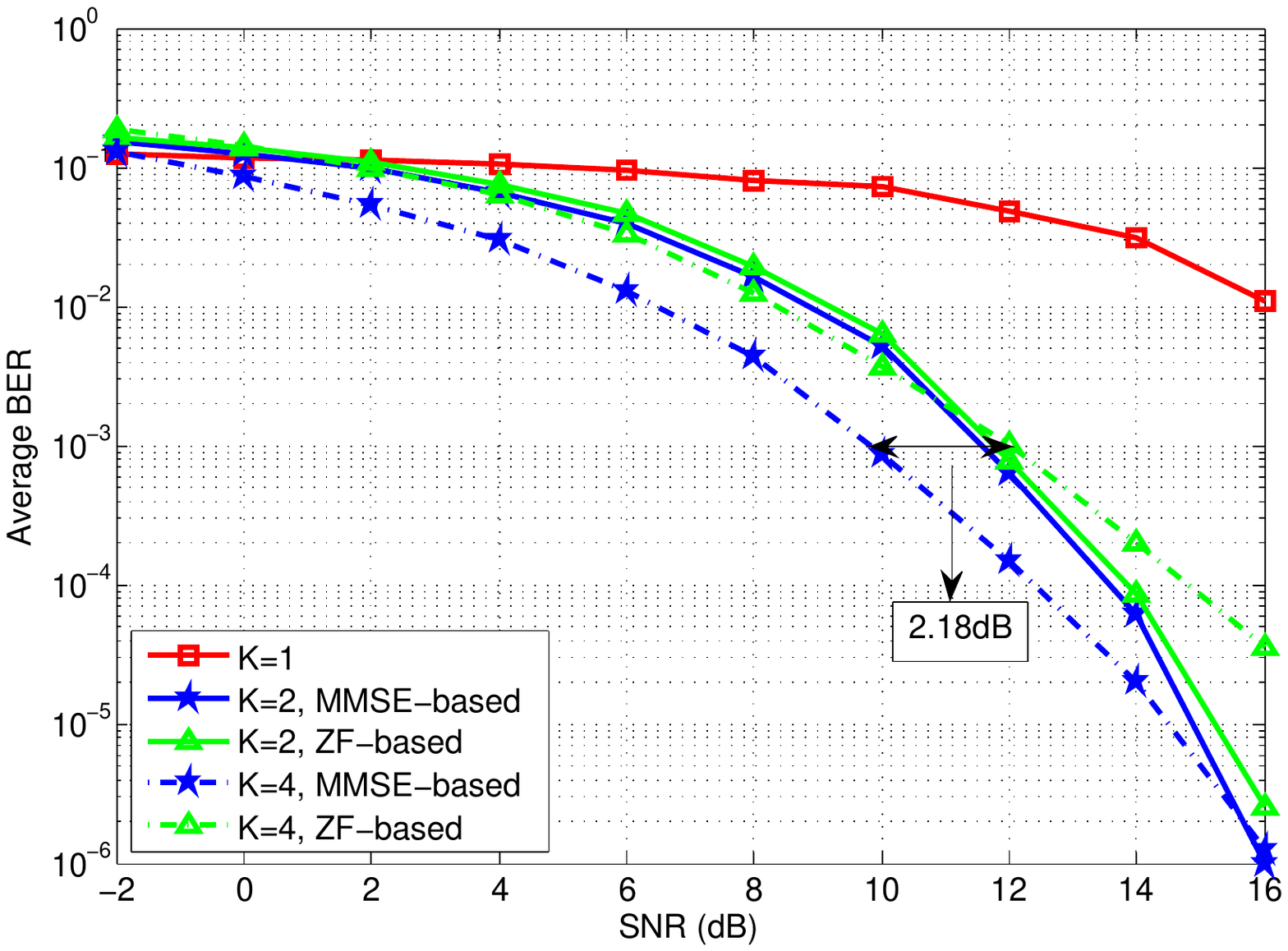}}
  \caption{Average BER comparison of CE multi-stream versus single-stream transmission schemes.}
  \label{SMBF}
\end{figure}
In this subsection, we evaluate the performance of our proposed CE multi-stream transmission schemes. In Fig. \ref{SMBF}, we compare the average bit error rate (BER) of three schemes given the same transmission rate $\bar{R}$: single-stream transmission, MMSE-based and ZF-based multi-stream transmission, respectively, under three setups: i) $\bar{R}=2$bps/Hz, $M_t=M_r=4$; ii) $\bar{R}=8$bps/Hz, $M_t=M_r=4$; and iii) $\bar{R}=8$bps/Hz, $M_t=M_r=8$, respectively. Note that we consider $K=2$ for the MMSE and ZF based schemes under the first two setups with $M_t=M_r=4$, and both $K=2$ and $K=4$ under the third setup with $M_t=M_r=8$, which can be shown to be feasible according to the results in Section IV.

First, it is observed that under all three setups and for any value of $K$, the MMSE-based multi-stream transmission scheme outperforms the ZF-based scheme for all SNR values. Specifically, at the average BER of $10^{-3}$, the SNR gain of the MMSE-based scheme over the ZF-based scheme is $2.10$dB and $1.05$dB for the first two setups, respectively, and $2.18$dB for the third setup with $K=4$. This can be explained by noting that the ZF-based scheme in general yields a suboptimal solution to Problem (P4) for the MMSE-based scheme.

Next, with given transmission rate, we investigate the effect of $K$ on the BER performance in order to draw insights for selecting the optimal transmission mode in practice. From Fig. \ref{SMBF} (a) and (b) with $M_t=M_r=4$, it is observed that at the average BER of $10^{-3}$, the single-stream transmission scheme has an SNR gain of $3.04$dB over the MMSE-based multi-stream scheme for the case of $\bar{R}=2$bps/Hz, but suffers from an SNR loss of $4.07$dB compared to the ZF-based multi-stream scheme for the case of $\bar{R}=8$bps/Hz; moreover, for the case of $\bar{R}=8$bps/Hz, the single-stream scheme eventually outperforms the two multi-stream schemes as the SNR grows. Similarly, it is observed from Fig. \ref{SMBF} (c) that the MMSE and ZF based schemes with $K=4$ outperform those with $K=2$, respectively, in the moderate-SNR regime; while the reverse is true in the high-SNR regime. This reveals that a large $K$ (i.e., multiplexing a large number of data streams) is preferable in the high-rate regime with moderate SNR, by exploiting more multiplexing gain of the MIMO channel; while a small $K$ is suitable for the low-rate and/or high-SNR regime, by extracting more beamforming gain from the MIMO channel.
\section{Conclusion}
This paper investigated the transceiver design for the MIMO channel with CE precoding. For single-stream transmission, we studied the receive beamforming optimization problem for any channel realization and desired constellation at the combiner output, to maximize the MED between any two signal points at the combiner output subject to the feasibility constraint of the constellation. We showed that this problem is always feasible under i.i.d. Rayleigh fading, and proposed an efficient algorithm based on SDR to find an approximate solution. The proposed receive beamforming scheme was shown to significantly outperform other benchmark schemes in terms of average SER. For multi-stream transmission, a new scheme adopting transmit antenna grouping and receive MMSE or ZF based beamforming was proposed. The joint design of the transmit antenna grouping and receive beamforming was further optimized to minimize the maximum SER over all data streams subject to the constellation feasibility constraints. Numerical results showed that the MMSE-based receive beamforming outperforms the ZF-based receive beamforming in terms of average BER; moreover, for fixed transmission rate, it is desirable to transmit with a large number of data streams in the high-rate and moderate-SNR regime, and a small number of data streams in the low-rate and/or high-SNR regime.
\appendices
\section{Proof of Proposition \ref{prop_BFfeas}}\label{proof_prop_BFfeas}
To start, we present the following lemma.
\begin{lemma}\label{lemma_feas}
If $\mathrm{rank}(\tilde{\mv{H}})\geq 2$, there exists a $\bar{\mv{u}}_l\in \mathbb{C}^{M_r\times1}$ for any $l\in \{1,...,M_t\}$ that satisfies the following conditions:
\begin{align}
\bar{\mv{u}}_l^H\left(2\tilde{\mv{h}}_l-\sum_{i=1}^{M_t}\tilde{\mv{h}}_i\right)=&0\label{f1}\\
\|\bar{\mv{u}}_l^H\tilde{\mv{H}}\|_1>&0\label{f3}.
\end{align}
\end{lemma}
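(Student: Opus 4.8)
The plan is to construct the vector $\bar{\mv{u}}_l$ explicitly using the null-space structure of the rank-$\geq 2$ channel matrix $\tilde{\mv{H}}$. Fix an index $l\in\{1,\dots,M_t\}$ and set $\mv{c}_l \overset{\Delta}{=} 2\tilde{\mv{h}}_l - \sum_{i=1}^{M_t}\tilde{\mv{h}}_i$. Condition (\ref{f1}) asks for $\bar{\mv{u}}_l$ orthogonal to $\mv{c}_l$ (in the Hermitian sense), while condition (\ref{f3}) asks that $\bar{\mv{u}}_l$ not be orthogonal to the column space of $\tilde{\mv{H}}$, i.e. $\bar{\mv{u}}_l^H\tilde{\mv{H}}\neq\mv{0}$, equivalently $\bar{\mv{u}}_l\notin\mathrm{Null}(\tilde{\mv{H}}^H)$. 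So the task reduces to showing that the orthogonal complement of the single vector $\mv{c}_l$ is not entirely contained in $\mathrm{Null}(\tilde{\mv{H}}^H)$.

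First I would dispose of the trivial case $\mv{c}_l = \mv{0}$: then (\ref{f1}) holds for every $\bar{\mv{u}}_l$, and since $\tilde{\mv{H}}\neq\mv{0}$ (its rank is at least two) we may pick $\bar{\mv{u}}_l$ to be any column of $\tilde{\mv{H}}$ itself, or more simply any vector with $\tilde{\mv{H}}^H\bar{\mv{u}}_l\neq\mv{0}$, which gives (\ref{f3}). For the main case $\mv{c}_l\neq\mv{0}$, I would argue by contradiction: suppose every $\bar{\mv{u}}$ with $\bar{\mv{u}}^H\mv{c}_l = 0$ satisfies $\tilde{\mv{H}}^H\bar{\mv{u}} = \mv{0}$. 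The set $\{\bar{\mv{u}}: \bar{\mv{u}}^H\mv{c}_l = 0\}$ is a subspace of $\mathbb{C}^{M_r\times1}$ of dimension $M_r - 1$. The assumption says this $(M_r-1)$-dimensional subspace is contained in $\mathrm{Null}(\tilde{\mv{H}}^H)$, hence $\dim\mathrm{Null}(\tilde{\mv{H}}^H)\geq M_r - 1$, which forces $\mathrm{rank}(\tilde{\mv{H}}) = \mathrm{rank}(\tilde{\mv{H}}^H) = M_r - \dim\mathrm{Null}(\tilde{\mv{H}}^H) \leq 1$. This contradicts $\mathrm{rank}(\tilde{\mv{H}})\geq 2$. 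Therefore there exists $\bar{\mv{u}}_l$ orthogonal to $\mv{c}_l$ with $\tilde{\mv{H}}^H\bar{\mv{u}}_l\neq\mv{0}$, i.e. satisfying both (\ref{f1}) and (\ref{f3}).

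Alternatively, and perhaps cleaner for the write-up, I would use a dimension-counting argument directly: the orthogonal complement of $\mv{c}_l$ has dimension at least $M_r - 1 \geq 1$, and the column space $\mathrm{Col}(\tilde{\mv{H}})$ has dimension $\mathrm{rank}(\tilde{\mv{H}})\geq 2$; one checks that a vector $\bar{\mv{u}}_l$ lying in $\mathrm{Col}(\tilde{\mv{H}})\cap\{\mv{c}_l\}^\perp$ but not in $\mathrm{Null}(\tilde{\mv{H}}^H)$ exists because $\mathrm{Col}(\tilde{\mv{H}})\cap\mathrm{Null}(\tilde{\mv{H}}^H) = \{\mv{0}\}$ (these are orthogonal subspaces), while $\dim(\mathrm{Col}(\tilde{\mv{H}})\cap\{\mv{c}_l\}^\perp)\geq \mathrm{rank}(\tilde{\mv{H}}) - 1\geq 1$. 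Any nonzero vector in this intersection works. I expect the main (though modest) obstacle to be bookkeeping the dimensions correctly and making sure the intersection of $\mathrm{Col}(\tilde{\mv{H}})$ with the hyperplane $\{\mv{c}_l\}^\perp$ is genuinely nonzero — which is exactly where the hypothesis $\mathrm{rank}(\tilde{\mv{H}})\geq 2$ (rather than $\geq 1$) is essential, since one full dimension may be consumed by the constraint $\bar{\mv{u}}_l^H\mv{c}_l = 0$.
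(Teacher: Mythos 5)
Your main argument---supposing that every $\bar{\mv{u}}$ orthogonal to $\mv{c}_l=2\tilde{\mv{h}}_l-\sum_{i=1}^{M_t}\tilde{\mv{h}}_i$ satisfies $\tilde{\mv{H}}^H\bar{\mv{u}}=\mv{0}$ and deducing $\mathrm{rank}(\tilde{\mv{H}})\leq 1$, contradicting $\mathrm{rank}(\tilde{\mv{H}})\geq 2$---is correct and is essentially the paper's own proof by contradiction, merely phrased through the dimension of the hyperplane $\{\mv{c}_l\}^\perp$ instead of the inclusion $\mathrm{Null}(\mv{c}_l^H)\subseteq\mathrm{Null}(\tilde{\mv{H}}^H)$, with your separate handling of the degenerate case $\mv{c}_l=\mv{0}$ being a minor tidiness the paper glosses over. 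The alternative constructive route via a nonzero vector in $\mathrm{Col}(\tilde{\mv{H}})\cap\{\mv{c}_l\}^\perp$ is also valid but not needed.
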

\begin{proof}
We prove Lemma \ref{lemma_feas} by contradiction. Suppose any solution $\bar{\mv{u}}_l$ to (\ref{f1}) is also a solution to $\bar{\mv{u}}_l^H\tilde{\mv{H}}={\mv{0}}$, then it can be shown that $\mathrm{Null}\left(2\tilde{\mv{h}}_l^H-\sum_{i=1}^{M_t}\tilde{\mv{h}}_i^H\right)\subseteq \mathrm{Null}\left(\tilde{\mv{H}}^H\right)$. It thus follows that $\mathrm{rank}\left(\tilde{\mv{H}}^H\right)\leq \mathrm{rank}\left(2\tilde{\mv{h}}_l^H-\sum_{i=1}^{M_t}\tilde{\mv{h}}_i^H\right)=1$. This contradicts the assumption of $\mathrm{rank}(\tilde{\mv{H}})\geq 2$. The proof of Lemma \ref{lemma_feas} is thus completed.
\end{proof}

With Lemma \ref{lemma_feas}, we prove Proposition \ref{prop_BFfeas} by showing that for any $l\in\{1,...,M_t\}$, any $\bar{\mv{u}}_l$ that satisfies (\ref{f1}) and (\ref{f3}) is a feasible solution to Problem (P2-F). First, notice that $\bar{\mv{u}}_l$ satisfies the constraint in (\ref{P2f2}). Then, we show that $\bar{\mv{u}}_l$ also satisfies the constraint in (\ref{P2f1}). Specifically, we have
\begin{align}
&|\bar{\mv{u}}_l^H\tilde{\mv{h}}_l|\overset{(a_1)}{=}\left|\bar{\mv{u}}_l^H\sum_{i\neq l}\tilde{\mv{h}}_i\right|\overset{(b_1)}{\leq}\sum_{i\neq l} |\bar{\mv{u}}_l^H\tilde{\mv{h}}_i|\label{p1eq2}\\
&|\bar{\mv{u}}_l^H\tilde{\mv{h}}_j|\overset{(a_2)}{=}\left|\bar{\mv{u}}_l^H\tilde{\mv{h}}_l-\sum_{i\neq l,i\neq j}\bar{\mv{u}}_l^H\tilde{\mv{h}}_i\right|\overset{(b_2)}{\leq} \sum_{i\neq j} |\bar{\mv{u}}_l^H\tilde{\mv{h}}_i|,\quad \forall j\in \{1,...,M_t\}\backslash\{l\},\label{p1eq3}
\end{align}
where ($a_1$) and ($a_2$) result from (\ref{f1}); ($b_1$) and ($b_2$) are due to the triangle inequality. It then follows from (\ref{p1eq2}) and (\ref{p1eq3}) that
\begin{align}
|\bar{\mv{u}}_l^H\tilde{\mv{h}}_i|\leq \|\bar{\mv{u}}_l^H\tilde{\mv{H}}\|_1-|\bar{\mv{u}}_l^H\tilde{\mv{h}}_i|,\quad \forall i\in \{1,...,M_t\},\label{eq3}
\end{align}
namely,
\begin{align}
\|\bar{\mv{u}}_l^H\tilde{\mv{H}}\|_\infty\leq \frac{1}{2}\|\bar{\mv{u}}_l^H\tilde{\mv{H}}\|_1\overset{(c)}{\leq} \frac{\tau+1}{2}\|\bar{\mv{u}}_l^H\tilde{\mv{H}}\|_1,
\end{align}
where ($c$) holds since $\tau\geq 0$. Therefore, $\bar{\mv{u}}_l$ satisfies the constraint in (\ref{P2f1}). The proof of Proposition \ref{prop_BFfeas} is thus completed.
\section{Proof of Lemma \ref{lemma_P2feas}}\label{proof_lemma_P2feas}
We first show that to prove Lemma \ref{lemma_P2feas}, it suffices to show that the optimal ${\mv{u}}_f$ to Problem (P2-FS) denoted by ${\mv{u}}_f^\star$ satisfies $\|{\mv{u}}_f^{\star H}\tilde{\mv{H}}\|_1>0$. Specifically, if $\|{\mv{u}}_f^{\star H}\tilde{\mv{H}}\|_1>0$, ${\mv{u}}_f^\star$ can be shown to be a feasible solution to Problem (P2-F) according to the proof of Proposition 1, by noting that it satisfies the constraint in (\ref{P2FS1}); moreover, since $\|{\mv{u}}_f^\star\|_2\leq 1$ holds, it is a feasible solution to Problem (P2).

Then, we prove $\|{\mv{u}}_f^{\star H}\tilde{\mv{H}}\|_1>0$ by contradiction. Suppose, on the contrary, that $\|{\mv{u}}_f^{\star H}\tilde{\mv{H}}\|_1=0$ holds. By noting that $\|{\mv{u}}_f^H\tilde{\mv{H}}\|_1\geq \mathfrak{Re}\left\{{\mv{u}}_f^H\sum_{i=1}^{M_t}\tilde{\mv{h}}_i^H\right\}$ holds for any ${\mv{u}}_f$, it follows that the optimal value of Problem (P2-FS) for any given $j$ is zero, which implies $\mathrm{Null}\left(2\tilde{\mv{h}}_j^H-\sum_{i=1}^{M_t}\tilde{\mv{h}}_i^H\right)\subseteq \mathrm{Null}\left(\sum_{i=1}^{M_t}\tilde{\mv{h}}_i^H\right),\ \forall j$. However, also note that this is only true if there exist $\beta_j$'s such that $\tilde{\mv{h}}_j^H=\beta_j\sum_{i=1}^{M_t}\tilde{\mv{h}}_i^H,\ \forall j$ holds, i.e., $\mathrm{rank}(\tilde{\mv{H}})= 1$. This contradicts our assumption of $\mathrm{rank}(\tilde{\mv{H}})\geq 2$. The proof of Lemma \ref{lemma_P2feas} is thus completed.
\section{Proof of Proposition \ref{prop_P2equi}} \label{proof_prop_P2equi}
First, given any feasible solution ${\mv{u}}$ to Problem (P2), it follows from (\ref{equiobj}) and (\ref{optp}) that $({\mv{u}},{\mv{p}}^\star({\mv{u}}))$ is feasible for Problem (P3) and achieves the same objective value as that of Problem (P2), thus the optimal value of Problem (P3) is no smaller than that of Problem (P2). On the other hand, it follows from (\ref{optp}) that the objective value of Problem (P3) with any feasible solution $({\mv{u}},{\mv{p}})$ is always no larger than that with the solution $({\mv{u}},{\mv{p}}^\star({\mv{u}}))$. Moreover, note that based on Lemma \ref{lemma_P2feas} and (\ref{optp}), $({\mv{u}}_f^\star,{\mv{p}}^\star({\mv{u}}_f^\star))$ is a feasible solution to Problem (P3) and yields $\mathfrak{Re}\left\{{\mv{u}}_f^{\star H}\tilde{\mv{H}}{\mv{p}}^\star({\mv{u}}_f^\star)\right\}>0$, where ${\mv{u}}_f^{\star}$ denotes the optimal ${\mv{u}}_f$ to Problem (P2-FS). It then follows that the optimal solution to Problem (P3) denoted by $({\mv{u}}^\star,{\mv{p}}^\star({\mv{u}}^\star))$ satisfies $\|{\mv{u}}^{\star H}\tilde{\mv{H}}\|_1=\mathfrak{Re}\left\{{\mv{u}}^{\star H}\tilde{\mv{H}}{\mv{p}}^\star({\mv{u}^\star})\right\}>0$, thus ${\mv{u}}^\star$ is feasible for Problem (P2) and achieves the same objective value as that of Problem (P3) with the optimal solution. Hence, the optimal value of Problem (P2) is no smaller than that of Problem (P3). Therefore, Problems (P2) and (P3) have the same optimal value. This thus completes the proof of Proposition \ref{prop_P2equi}.
\section{Proof of Proposition \ref{prop_P5eqv}}\label{proof_prop_P5eqv}
First, given any feasible solution ${\mv{u}}_k$ to Problem (P4), it can be shown that $\left({\mv{u}}_k\big/\sqrt{\|{\mv{u}}_k^H\tilde{\mv{H}}_{[-k]}\|_2^2+\tilde{\sigma}^2},\right.$\\$\left.1\big/\sqrt{\|{\mv{u}}_k^H\tilde{\mv{H}}_{[-k]}\|_2^2+\tilde{\sigma}^2}\right)$ is feasible for Problem (P5) and achieves the same objective value as that of Problem (P4). On the other hand, given any feasible solution $({\mv{u}}_k,t_k)$ to Problem (P5), it can be shown that ${\mv{u}}_k/\sqrt{t_k}$ is a feasible solution to Problem (P4) and achieves the same objective value as that of Problem (P5). Therefore, Problem (P4) and Problem (P5) have the same optimal value. The proof of Proposition \ref{prop_P5eqv} is thus completed.
\section{Proof of Proposition \ref{prop_SMfeas}}\label{proof_prop_SMfeas}
We prove Proposition \ref{prop_SMfeas} by showing that for any transmit antenna grouping, $M_r\geq \frac{(K-1)M_t}{K}+2$ and $\frac{M_t}{K}\geq 2$ yield $\mathrm{rank}({\mv{H}}_k)\geq 2,\ \forall k$, thus guaranteeing the feasibility of Problem (P7) for all $k$'s according to Proposition \ref{prop_BFfeas}.

First, note that $\mathrm{rank}(\tilde{\mv{V}}_k)=M_r-\mathrm{rank}(\tilde{\mv{H}}_{[-k]})=M_r-\frac{(K-1)M_t}{K}$. Then, by noting that $\tilde{\mv{V}}_k^H\tilde{\mv{H}}_{[-k]}={\mv{0}}$, it can be shown that $\mathrm{rank}({\mv{H}}_k)=\mathrm{rank}\left(\tilde{\mv{V}}_k^H\tilde{\mv{H}}_k\right)=\mathrm{rank}\left(\tilde{\mv{V}}_k^H\tilde{\mv{H}}\right)\geq \mathrm{rank}\left(\tilde{\mv{V}}_k^H\right)+\mathrm{rank}(\tilde{\mv{H}})-M_r=\min\{M_r,M_t\}-\frac{(K-1)M_t}{K}$. Hence, $M_r\geq \frac{(K-1)M_t}{K}+2$ and $\frac{M_t}{K}\geq 2$ suffice to ensure $\mathrm{rank}({\mv{H}}_k)\geq 2$. The proof of Proposition \ref{prop_SMfeas} is thus completed.
\bibliographystyle{IEEEtran}
\bibliography{CEMIMO}

% Generated by IEEEtran.bst, version: 1.13 (2008/09/30)
\begin{thebibliography}{10}
\providecommand{\url}[1]{#1}
\csname url@samestyle\endcsname
\providecommand{\newblock}{\relax}
\providecommand{\bibinfo}[2]{#2}
\providecommand{\BIBentrySTDinterwordspacing}{\spaceskip=0pt\relax}
\providecommand{\BIBentryALTinterwordstretchfactor}{4}
\providecommand{\BIBentryALTinterwordspacing}{\spaceskip=\fontdimen2\font plus
\BIBentryALTinterwordstretchfactor\fontdimen3\font minus
  \fontdimen4\font\relax}
\providecommand{\BIBforeignlanguage}[2]{{%
\expandafter\ifx\csname l@#1\endcsname\relax
\typeout{** WARNING: IEEEtran.bst: No hyphenation pattern has been}%
\typeout{** loaded for the language `#1'. Using the pattern for}%
\typeout{** the default language instead.}%
\else
\language=\csname l@#1\endcsname
\fi
#2}}
\providecommand{\BIBdecl}{\relax}
\BIBdecl

\bibitem{ICC16}
S.~Zhang, R.~Zhang, and T.~J. Lim, ``Receive beamforming optimization for
  {MIMO} system with constant envelope precoding,'' in \emph{Proc. IEEE Int.
  Conf. Commun. (ICC)}, May 2016, pp. 1--6.

\bibitem{SUCE}
S.~K. Mohammed and E.~G. Larsson, ``Single-user beamforming in large-scale
  \protect{MISO} systems with per-antenna constant-envelope constraints: the
  doughnut channel,'' \emph{IEEE Trans. Wireless Commun.}, vol.~11, no.~11, pp.
  3992--4005, Nov. 2012.

\bibitem{JSTSPCE}
J.~Pan and W.-K. Ma, ``Constant envelope precoding for single-user large-scale
  \protect{MISO} channels: efficient precoding and optimal designs,''
  \emph{IEEE J. Sel. Topics Signal Process.}, vol.~8, no.~5, pp. 982--995, Oct.
  2014.

\bibitem{CEadaptive}
S.~Zhang, R.~Zhang, and T.~J. Lim, ``Constant envelope precoding with adaptive
  receiver constellation in {MISO} fading channel,'' \emph{IEEE Trans. Wireless
  Commun.}, to appear. [Online]. Available:
  \url{http://arxiv.org/abs/1503.09178}.

\bibitem{CEMulticast}
------, ``{MISO} multicasting with constant envelope precoding,'' \emph{IEEE
  Wireless Commun. Lett.}, to appear.

\bibitem{MUCE}
S.~K. Mohammed and E.~G. Larsson, ``Per-antenna constant envelope precoding for
  large multi-user \protect{MIMO} systems,'' \emph{IEEE Trans. Commun.},
  vol.~61, no.~3, pp. 1059--1071, Mar. 2013.

\bibitem{improved}
J.-C. Chen, C.-K. Wen, and K.-K. Wong, ``Improved constant envelope multiuser
  precoding for massive {MIMO} systems,'' \emph{IEEE Commun. Lett.}, vol.~18,
  no.~8, pp. 1311--1314, Aug. 2014.

\bibitem{freqCE}
S.~K. Mohammed and E.~G. Larsson, ``Constant-envelope multi-user precoding for
  frequency-selective massive \protect{MIMO} systems,'' \emph{IEEE Wireless
  Commun. Lett.}, vol.~2, no.~5, pp. 547--550, Oct. 2013.

\bibitem{phaseconstraint}
S.~Mukherjee and S.~K. Mohammed, ``Constant envelope precoding with
  time-variation constraint on the transmitted phase angles,'' \emph{IEEE
  Wireless Commun. Lett.}, vol.~4, no.~2, pp. 221--224, Apr. 2015.

\bibitem{RFPA}
S.~C. Cripps, \emph{\protect{RF} \protect{Power Amplifiers for Wireless
  Communications}}.\hskip 1em plus 0.5em minus 0.4em\relax Artech
  \protect{Publishing House}, 1999.

\bibitem{phaseshifter}
UKRF, ``Analog and digital phase shifters,'' [Online]. Available at:
  \url{http://www.ukrf.com/controlpanel/shoppics/pdfs/MiteqAnalogueDigitalPhaseShifters.pdf}.

\bibitem{WSA14}
A.~S. Mohammed, R.~R. Muller, and G.~Fischer, ``A novel single-{RF} transmitter
  for massive {MIMO},'' in \emph{Proc. Int. ITG Workshop Smart Antenna (WSA)},
  Mar. 2014, pp. 1--8.

\bibitem{linearprecoding}
A.~Wiesel, Y.~C. Eldar, and S.~Shamai~(Shitz), ``Linear precoding via conic
  optimization for fixed {MIMO} receivers,'' \emph{IEEE Trans. Signal
  Process.}, vol.~54, no.~1, pp. 161--176, Jan. 2006.

\bibitem{dualityPAPC}
W.~Yu and T.~Lan, ``Transmitter optimization for the multi-antenna downlink
  with per-antenna power constraints,'' \emph{IEEE Trans. Signal Process.},
  vol.~55, no.~6, pp. 2646--2660, Jun. 2007.

\bibitem{LTCC}
L.~Zhang, R.~Zhang, Y.-C. Liang, Y.~Xin, and H.~V. Poor, ``On
  \protect{Gaussian} \protect{MIMO} \protect{BC-MAC} duality with multiple
  transmit covariance constraints,'' \emph{IEEE Trans. Inf. Theory}, vol.~58,
  no.~4, pp. 2064--2078, Apr. 2012.

\bibitem{multicellBD}
R.~Zhang, ``Cooperative multi-cell block diagonalization with per-base-station
  power constraints,'' \emph{IEEE J. Sel. Areas Commun.}, vol.~28, no.~9, pp.
  1435--1445, Dec. 2010.

\bibitem{massivePAPC}
S.~Zhang, R.~Zhang, and T.~J. Lim, ``Massive \protect{MIMO} with per-antenna
  power constraint,'' in \emph{Proc. IEEE Global Conf. on Signal and Inf.
  Process. (GlobalSIP)}, Dec. 2014, pp. 642--646.

\bibitem{digicom}
J.~G. Proakis and M.~Salehi, \emph{Digital Communications}.\hskip 1em plus
  0.5em minus 0.4em\relax McGraw-Hill, 2008.

\bibitem{Huang06}
S.~Zhang and Y.~Huang, ``Complex quadratic optimization and semidefinite
  programming,'' \emph{SIAM J. Optim.}, vol.~16, no.~3, pp. 871--890, 2006.

\bibitem{Luo10}
Z.-Q. Luo, W.-K. Ma, A.-C. So, Y.~Ye, and S.~Zhang, ``Semidefinite relaxation
  of quadratic optimization problems,'' \emph{IEEE Signal Process. Mag.},
  vol.~27, no.~3, pp. 20--34, May 2010.

\bibitem{Stoica14}
M.~Soltanalian and P.~Stoica, ``Designing unimodular codes via quadratic
  optimization,'' \emph{IEEE Trans. Signal Process.}, vol.~62, no.~5, pp.
  1221--1234, Mar. 2014.

\bibitem{cvx}
M.~Grant and S.~Boyd, ``{CVX: Matlab software for disciplined convex
  programming},'' version 2.1. [Online] Available: \url{http://cvxr.com/cvx/},
  Jun. 2015.

\bibitem{Charnes}
A.~Charnes and W.~W. Cooper, ``Programming with linear fractional functions,''
  \emph{Naval Res. Logist. Quarter.}, vol.~9, pp. 181--186, Dec. 1962.

\end{thebibliography}
\end{document}